  \providecommand\BibTeX{{%
    \normalfont B\kern-0.5em{\scshape i\kern-0.25em b}\kern-0.8em\TeX}}}
\newcommand{\claire}[1]{{\color{black} #1}}
\newcommand{\michel}[1]{{\color{black} #1}}
\title{The SpaceSaving± Family of Algorithms for Data Streams with Bounded Deletions}
\author{Fuheng Zhao}
\affiliation{%
  \institution{UC Santa Barbara}
  \country{}
}
\email{fuheng\_zhao@ucsb.edu}
\author{Divyakant Agrawal}
\affiliation{%
  \institution{UC Santa Barbara}
  \country{}
}
\email{agrawal@cs.ucsb.edu}
\author{Amr El Abbadi}
\affiliation{%
  \institution{UC Santa Barbara}
  \country{}
}
\email{amr@cs.ucsb.edu}
\author{Claire Mathieu}
\affiliation{%
  \institution{CNRS and IRIF}
  \country{}
}
\email{clairemmathieu@gmail.com}
\author{Ahmed Metwally}
\affiliation{%
  \institution{Uber Inc.}
  \country{}
}
\email{ametwally@gmail.com}
\author{Michel de Rougemont}
\affiliation{%
  \institution{University Paris II and IRIF}
  \country{}
}
\email{m.derougemont@gmail.com}
\begin{document}

\begin{abstract}
In this paper, we present an advanced analysis of near optimal algorithms that use limited space to solve the frequency estimation, heavy hitters, frequent items, and top-k approximation in the bounded deletion model. We define the family of SpaceSaving± algorithms and explain why the original SpaceSaving± algorithm only works when insertions and deletions are not interleaved. Next, we propose the new Double SpaceSaving±, Unbiased Double SpaceSaving±, and Integrated SpaceSaving± and prove their correctness. The three proposed algorithms represent different trade-offs, in which Double SpaceSaving± can be extended to provide unbiased estimations while Integrated SpaceSaving± uses less space. Since data streams are often skewed, we present an improved analysis of these algorithms and show that errors do not depend on the hot items. We also demonstrate how to achieve relative error guarantees under mild assumptions. Moreover, we establish that the important mergeability property is satisfied by all three algorithms, which is essential for running the algorithms in distributed settings.
\end{abstract}

\begin{CCSXML}
<ccs2012>
<concept>
<concept_id>10002951.10002952</concept_id>
<concept_desc>Information systems~Data management systems</concept_desc>
<concept_significance>500</concept_significance>
</concept>
<concept>
<concept_id>10002951.10003227.10003351</concept_id>
<concept_desc>Information systems~Data mining</concept_desc>
<concept_significance>500</concept_significance>
</concept>
<concept>
<concept_id>10003033.10003099.10003105</concept_id>
<concept_desc>Networks~Network monitoring</concept_desc>
<concept_significance>500</concept_significance>
</concept>
<concept>
<concept_id>10003752.10003809.10010055</concept_id>
<concept_desc>Theory of computation~Streaming, sublinear and near linear time algorithms</concept_desc>
<concept_significance>500</concept_significance>
</concept>
</ccs2012>
\end{CCSXML}

\ccsdesc[500]{Information systems~Data management systems}
\ccsdesc[500]{Information systems~Data mining}
\ccsdesc[500]{Networks~Network monitoring}
\ccsdesc[500]{Theory of computation~Streaming, sublinear and near linear time algorithms}

\keywords{Data Mining, Streaming Algorithms, Data Summary, Data Sketch, Frequency Estimation, Heavy Hitters, Frequent Items, Top-K.} 




\maketitle

\section{Introduction}
Streaming algorithms~\cite{cormode2020small, mining2006data} aim to process a \claire{stream of data} in a single pass with small \claire{space}. In particular, \claire{for a stream of updates of items,} these algorithms often only store a compact summary which \claire{uses} space \claire{at most} logarithmic \claire{in} the data stream length or \claire{in} the cardinality of the input domain. The streaming paradigm provides essential analysis and statistical measurements with strong accuracy guarantees for many big data applications~\cite{agrawal2011challenges}. For instance, Hyperloglog~\cite{flajolet2007hyperloglog} \claire{answers} queries about cardinality; Bloom Filters~\cite{bloom1970space} answer membership queries; KLL~\cite{karnin2016optimal, ivkin2019streaming, zhao2021kll} gives accurate quantile approximations such as finding the median. These data summaries (sketches) are now the cornerstones for many real-world systems and applications including network monitoring~\cite{yang2018elastic, zhao2023panakos, xu2023mimosketch}, storage engine~\cite{dayan2017monkey, zhao2023autumn}, data mining~\cite{wang2019memory, li2020wavingsketch}, federated learning~\cite{rothchild2020fetchsgd, jiang2018sketchml}, privacy-preserving analytic~\cite{chen2023communication, zhao2022differentially,lebeda2023better, pagh2022improved, wang2024dpsw}, vector search~\cite{broder1997resemblance, bruch2023approximate}, and many more. We refer the reader to a recent paper~\cite{cormode2023gems} for a more thorough discussion on the applications.

In this paper, we focus on the frequency estimation and heavy hitters problems with the presence of deletions. Given a data stream $\sigma$ from some universe $U$, the \textbf{Frequency Estimation} problem \claire{constructs a summary providing,} for every item $x$ in $U$,  an estimate $\hat{f}(x)$ of the true frequency $f(x)$. Since resources are limited (small memory footprint), the data summary cannot track every item and hence there is an inherent trade-off between the space \claire{used} and \claire{the accuracy of the} estimate. In the closely related approximate \textbf{Heavy Hitters}  (\textbf{Frequent Items}) problem, there is an additional input frequency threshold $T$, and \claire{the goal is to} identify all items with frequency larger than $T$. \claire{This applies to settings where} data scientists and system operators are interested in knowing what items are hot (appear very often). Frequency estimation, heavy hitters and approximate top-k have found many applications in big data systems~\cite{bailis2017macrobase,zakhary2021cache,liu2023treesensing, chiosa2021skt}.

\begin{table*}[tbph]
    \centering
    \begin{tabular}{c|c|c|c|c}
    \hline
        {\bf Algorithm} & {\bf Space} & {\bf Model} & {\bf Error Bound}  & {\bf Note}  \\ 
        \hline
        \hline
        
        SpaceSaving $\&$ MG  & $O(\frac{1}{\epsilon})$ & Insertion-Only & $|f_i-\hat{f}_i| \leq \epsilon F_1$ & Lemma~\ref{spacesaving error bound} \\

        ~\cite{metwally2005efficient, misra1982finding, berinde2010space, RC23} &$O(\frac{k}{\epsilon})$ & & $|f_i-\hat{f}_i| \leq \frac{\epsilon}{k} F_1^{res(k)}$& Lemma~\ref{spacesaving residual error bound}\\

         & $O(\frac{k}{\epsilon} \frac{2-\gamma}{2(\gamma - 1)})$ & ($\gamma$-decreasing) & $|f_i-\hat{f}_i| \leq \epsilon f_{i}, \forall i \leq k$ & Lemma~\ref{spacesaving relative error bound} \\
        \hline
        
        Count-Min~\cite{cormode2005improved} & $O(\frac{k}{\epsilon}\log n)$ & Turnstile & $|f_i-\hat{f}_i| \leq \frac{\epsilon}{k} \epsilon F_1^{res(k)}$ & Never underestimate\\ \hline
        
        CountSketch~\cite{charikar2002finding} &  $O(\frac{k}{\epsilon}\log n)$ & Turnstile & $(f_i - \hat{f}_i)^2 \leq \frac{\epsilon}{k} F_2^{res(k)}$ & unbiased\\ \hline
        
        DoubleSS$\pm$ \& IntegratedSS$\pm$ & $O(\frac{\alpha}{\epsilon})$ & Bounded Deletion & $|f_i-\hat{f}_i| \leq \epsilon F_1$ & Theorem~\ref{dss theorem: frequency estimation} \&~\ref{theorem: frequency estimation} \\ 

         Unbiased DoubleSS$\pm$ & $O(\frac{\alpha}{\epsilon})$ & & unbiased \& Var < $\epsilon^2F_1^2$ & Theorem~\ref{dss theorem: unbiased} \\ 
        
        DoubleSS$\pm$ \& IntegratedSS$\pm$ &  $O(\frac{\alpha k}{\epsilon})$ &  & $|f_i-\hat{f}_i| \leq \frac{\epsilon}{k} F_{1.\alpha}^{res(k)}$ & Theorem~\ref{dss+- residual error} \&~\ref{ss+- residual error} \\ 
        
        DoubleSS$\pm$ \& IntegratedSS$\pm$ &  $O(\frac{\alpha k}{\epsilon} \frac{(2-\gamma)k^{\beta}}{2(\gamma-1) 2^{log_{\gamma}k}})$ & ($\gamma$-decreasing and $\beta$-zipf) & $|f_i-\hat{f}_i| \leq \epsilon f_i, \forall i \leq k$ & Theorem~\ref{dss+- relative error} \&~\ref{ss+- relative error}\\
        \hline

    \end{tabular}
    \caption{Comparison between different frequency estimation and approximate heavy hitters algorithms.}
    \label{tab:algorithm-comparison}
\end{table*}

\subsection{Related Work}
There is a large body of algorithms proposed to solve frequency estimation and heavy hitters problems~\cite{misra1982finding, metwally2005efficient, charikar2002finding, cormode2005improved, karp2003simple, demaine2002frequency, braverman2017bptree, jayaram2018data, zhao2021spacesaving, zhao2022differentially, manku2002approximate}. The existing algorithms can be classified into three models: insertion-only, turnstile, and bounded-deletion. In the insertion-only model, all arriving stream items are insertions \claire{and the frequency $f(x)$ of $x$ is the number of times $x$ has been inserted}. SpaceSaving~\cite{metwally2005efficient} and MG~\cite{misra1982finding} are two popular algorithms that operate in the insertion-only model; they provide strong deterministic guarantees with minimal space. Moreover, since these algorithms are deterministic, they are inherently adversarial robust~\cite{hasidim2020adversarially, ben2022framework, cohen2022robustness}. In the turnstile model, arriving stream items can be either insertions or deletions \claire{and the (net) frequency $f(x)$ of $x$ is the difference between the number of times $x$ has been inserted and deleted, with the proviso that} no item's frequency is negative. Count-Min~\cite{cormode2005improved} and CountSketch~\cite{charikar2002finding} are two popular algorithms that operate in the turnstile model. Supporting deletions is a harder task and as a result Count-Min and Count-Sketch require more space with poly-log dependence on the input domain size to offer strong error bounds with high probability~\cite{alon1996space}. The bounded deletion model~\cite{jayaram2018data, zhao2021kll, zhao2021spacesaving} assumes the number of deletions are bounded by \claire{a fraction of} the number of insertions. For instance, in standard testings, if students are only allowed to submit one regrade request on their tests, then the number of deletions is at most half of the number of insertions in which updating a student's grade is a deletion followed by an insertion. 

Recently, SpaceSaving$\pm$~\cite{zhao2021spacesaving} proposed 
\claire{an algorithm extending the original SpaceSaving~\cite{metwally2005efficient} to handle bounded deletions efficiently, (implicitly) assuming~\cite{zhao2021spacesaving} that the data stream has no interleaving between insertions and deletions: all insertions precede all deletions~\cite{metwally2005efficient}~\cite{zhao2023errata}.} 
In addition, Berinde et al.~\cite{berinde2010space} showcased that data summaries in the insertion-only model can achieve an even stronger error bound using residual error analysis. The difference among these algorithms and their corresponding error guarantees is listed in Table~\ref{tab:algorithm-comparison}. 
As a result, in this work, we aim to investigate the SpaceSaving$\pm$ family of algorithms to support interleavings between insertions and deletions and provide a tighter error bound analysis.


\subsection{Our Contributions}
In this paper, we present a detailed analysis of the SpaceSaving$\pm$ family of algorithms with bounded deletions. In particular, we propose three new algorithms: Double SpaceSaving$\pm$, Unbiased Double SpaceSaving$\pm$, and Integrated SpaceSaving$\pm$, which operate in the general bounded deletion model (with interleavings of insertions and deletions) and provide strong theoretical guarantees with small space, while having distinct characteristics. In addition, we introduce the residual error bound guarantee in the bounded deletion model and prove that the proposed algorithms satisfy the stronger residual error guarantee. Moreover, we demonstrate that the proposed algorithms also achieve the relative error bound~\cite{RC23,cormode2021relative} under mild assumptions, assuming skewness in the data stream. Finally, we provide a merge algorithm for our proposed algorithms and prove that all proposed algorithms are mergeable in the bounded deletion model. The mergeability is critical for distributed applications~\cite{agarwal2012mergeable}.

\section{Background}
\subsection{Preliminaries}
{Consider a data stream consisting of a sequence of $N$ operations (insertions or deletions) on items drawn from a universe $U$ of size $|U|=n$. We denote the stream by $\sigma = \{(item_{t},op_t)\}_{t=\{1,2...N\}}$ where $item_t\in U$ is an element of the universe and $op_t\in \{insertion,deletion\}$. 
Let $I$ denote the total number of insertions and $D$ denote the total number of deletions in the stream, so that $I+D=N$. In the {\em bounded deletion model}~\cite{jayaram2018data,zhao2021kll}, it is assumed that, for some parameter $\alpha \geq 1$, the stream is constrained to satisfy  $D \leq (1-1/\alpha)I$. 
Observe that when $\alpha=1$, there can be no deletions, and so the model includes the insertion-only model as a special case. Let $f$ denote the frequency function: the frequency of an item $x$ is $f(x) = I(x) - D(x) $ where $I(x)$ denotes the number of insertions of $x$ in the stream and $D(x)$ denotes the number of deletions of $x$ in the stream. In this paper, we use a {\em summary} data structure from which we infer an estimation of the frequency function. We use $\hat{f}(x)$ to denote the estimated frequency of $x$.}

{
Let $\{f_i\}_{1}^{n}$ denote the frequencies of the $n$ items, sorted in non-increasing order, so that $f_1 \geq f_2,...\geq f_n$. Similarly, sorting $\{ I(x):x\in U\}$ in non-increasing order leads to $\{I_i\}_{1}^{n}$ where $I_1 \geq I_2,...\geq I_n$, and sorting $\{D(x): x\in U\}$  in non-increasing order leads to $\{D_i\}_{1}^{n}$ where $D_1 \geq D_2,... \geq D_n$.
} 
Notice that, for the same $i$, the item corresponding to $f_i$ may differ from the item corresponding to $I_i$ and  from the item corresponding to $D_i$. For the special case of the insertion-only model, the frequency vector and insertion vector are identical. Moreover, we use $F_1$ to denote the total number of items resulting from the stream operations, such that $F_1 = I-D =  \sum_{i=1}^{n} f_i$. In this paper, we focus on
the unit updates model (an operation only inserts or deletes one count of an item) and assume that at any point of time, no item frequency is below zero (i.e, items cannot be deleted if
they were not previously inserted).

\begin{definition} [Deterministic Frequency Estimation Problem] {Given a stream of operations (insertions or deletions) of items from a universe $U$, the Frequency Estimation Problem asks for an (implicitly defined) function $\hat{f}$ that when queried for any element $x\in U$, returns a value $\hat{f}(x)$ such that}
    $$\forall x \in U, |f(x) - \hat{f}(x)| \leq \epsilon F_1,$$
    {where $F_1 = \sum_{x\in U} f(x)$.}
\end{definition}

\begin{definition} [Heavy Hitters problem]
{Given a parameter $\epsilon>0$, a {\em heavy hitter} is an item with frequency at least $\epsilon F_1$. Given a stream of operations (insertions or deletions) of items from a universe $U$, the {\em heavy hitters problem}  asks for an (explicitly defined) subset of $U$ that contains all heavy hitters.
}
\end{definition}

{All of our streaming algorithms use a {\em summary} 
 of the information seen in the stream so far, consisting of a collection of items and some counts for each of them. The size of the summary is much smaller than $|U|$ (equivalent to $n$), and we say that an item is {\em monitored} if it is present in the summary.}
 
\subsection{The SpaceSaving± Family of Algorithms}

\begin{algorithm}[]
\SetAlgoLined
    \KwIn{Insertion-only data stream $\sigma$ and SpaceSaving summary $S$}
    \KwOut{SpaceSaving summary $S$}
    
    \For{(item $e$, insertion) from $\sigma$}{
        \uIf{$e \in S$}{
            $count_{e}$ += 1 \;
        }
        \uElseIf{$S$ not full}{
            add $e$ to $S$ with $count_{e}$ = 1 \;
        }
        \uElse{
            $minItem$ = item $x \in S$ minimizing $count_{x}$\;
            $w$ = $count_{minItem}$\;
            evict $minItem$ from summary\;
            add $e$ to summary with $count_{e} = w+1$\;
        }
    }
 \caption{SpaceSaving Update Algorithm (insertion-only, one count per item)}
 \label{SpaceSaving algorithm}
\end{algorithm}

The SpaceSaving$\pm$ family of algorithms are data summaries that build on top of the original SpaceSaving~\cite{metwally2005efficient} algorithm. In 2005, Metwally, Agrawal, and El~Abbadi~\cite{metwally2005efficient} proposed the popular \textbf{SpaceSaving} algorithm that provides highly accurate estimates on item frequency and heavy hitters among many competitors~\cite{cormode2008finding}. The SpaceSaving algorithm operates in the insertion-only model {(so that the stream is simply a sequence of items from $U$)} and uses the optimal space~\cite{berinde2010space}. The update algorithm, as shown in Algorithm~\ref{SpaceSaving algorithm}~\cite{metwally2005efficient}, uses $m$ counters to store a monitored item's identity and its estimated frequency. When a new $item$ arrives: if $item$ is monitored, then increment its counter; if $item$ is not monitored and the summary has unused counters, then start to monitor $item$, and set $count_{item}$ to 1; otherwise, i) find $minItem$, the item with minimum counter ($minCount$), ii) evicts this $minItem$ from the summary, and iii) monitor the new $item$ and set $count_{item}$ to $minCount+1$. {See Algorithm~\ref{SpaceSaving algorithm}.}

{After the stream has been read, the SpaceSaving Update algorithm has produced a summary. 
To solve the Frequency Estimation Problem, the algorithm  then needs to estimate the frequency $f(e)$ of an item $e$ being queried: the SpaceSaving Query algorithm returns $\hat f(e)=count_{e}$ if $e$ is monitored and $\hat f(e)=0$} otherwise, as shown in Algorithm~\ref{SpaceSaving query}. {(To solve the heavy hitters problem, the algorithm outputs all items in the summary whose count is greater than or equal to $\epsilon $ times the sum of the counts.)} 

When the number of counters {(size of the summary)} $m=\frac{1}{\epsilon}$, SpaceSaving solves both frequency estimation and heavy hitters problems in the insertion-only model, as shown in Lemma~\ref{spacesaving error bound}. Moreover, SpaceSaving satisfies a {more refined} residual error bound as shown in Lemma~\ref{spacesaving residual error bound} in which {\em "hot"} items do not contribute error and the error bound only depends on the {\em "cold"} and {\em "warm"} items~\cite{berinde2010space}.

The \textbf{$\epsilon$ error guarantee} is shown in Lemma~\ref{spacesaving error bound}. When $m=\frac{1}{\epsilon}$, SpaceSaving ensures $\forall x \in U, |f(x) - \hat{f}(x)| \leq \epsilon F_1$.
\begin{lemma} \label{spacesaving error bound}~\cite{metwally2005efficient} 
    After processing insertion-only data stream $\sigma$ {with the SpaceSaving Update algorithm (Algorithm~\ref{SpaceSaving algorithm}) with $m$ counters, the SpaceSaving Query algorithm (algorithm~\ref{SpaceSaving query}) provides an estimate $\hat f$ such that}  $\forall x \in U, |f(x) - \hat{f}(x)| \leq \frac{F_1}{m}$, where $F_1 = \sum_{i=1}^{n} f_i$.
\end{lemma}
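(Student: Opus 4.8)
The plan is to show two complementary facts: the SpaceSaving estimate never underestimates and overestimates by at most the final value of $minCount$ (the smallest counter in the summary), and that $minCount \le F_1/m$. Together these pin the additive error to $F_1/m$ on both sides. Throughout, let $minCount$ denote the smallest counter value once the summary is full; if the summary never fills up during the stream, then every inserted item is monitored with its exact count and the bound holds trivially, so I assume the summary is full at the end of the stream.

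First I would record two structural facts about Algorithm~\ref{SpaceSaving algorithm}. (i) Each of the three branches increases the sum of all counters by exactly one: incrementing adds $1$, adding a fresh item adds $1$, and the eviction branch removes $w=minCount$ and re-inserts $w+1$ for a net $+1$. Hence after the whole stream $\sum_{e \in S} count_e = I = F_1$, and since the full summary holds $m$ counters, averaging gives $minCount \le F_1/m$. (ii) Evicting the minimum and re-inserting it at value $minCount+1$ can only raise the smallest counter, so $minCount$ is non-decreasing over time.

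The heart of the argument is a pair of invariants proved by simultaneous induction on the stream operations, where $f$ and $minCount$ denote their current values: for every monitored item $e$ we have $f(e) \le count_e \le f(e) + minCount$, and for every item $x$ not in the summary we have $f(x) \le minCount$. The increment branch and the ``add to a non-full summary'' branch are easy, since the gap $count_e - f(e)$ is either unchanged or zero while $minCount$ only grows. The delicate branch is eviction: when $e$ is (re)inserted with $count_e = minCount+1$, I invoke the second invariant for the previously-unmonitored $e$ to get $f(e) \le minCount$ just before the insertion, hence $f(e) \le minCount+1 = count_e$ afterward and $count_e - f(e) \le minCount$; and for the evicted item $y$, the first invariant gives $f(y) \le count_y = minCount$, which is at most the new $minCount$, re-establishing the second invariant for $y$. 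This mutual dependence between the two invariants, together with the transition when the summary first becomes full, is the main obstacle, and is exactly where monotonicity of $minCount$ from fact (ii) is used.

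Finally I would combine the pieces. For a monitored item $x$, the query algorithm returns $\hat f(x) = count_x$, and the first invariant gives $0 \le count_x - f(x) \le minCount \le F_1/m$, so $|f(x)-\hat f(x)| \le F_1/m$. For an unmonitored item $x$, the query returns $\hat f(x)=0$, and the second invariant gives $|f(x)-\hat f(x)| = f(x) \le minCount \le F_1/m$. Since monitored and unmonitored items exhaust all $x \in U$, the bound $\forall x \in U,\ |f(x)-\hat f(x)| \le F_1/m$ follows.
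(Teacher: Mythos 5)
Your proposal is correct. Note that the paper itself gives no proof of this lemma: it is imported verbatim (with citation) from the original SpaceSaving paper of Metwally, Agrawal, and El~Abbadi, so there is no in-paper argument to compare against. Your proof is essentially the classical one from that source: the sum of counters equals $F_1$ (each branch adds exactly one), hence $minCount \le F_1/m$; monitored items are never underestimated and overestimate by at most $minCount$; and unmonitored items have true frequency at most $minCount$. Your packaging of the second and third facts as a pair of mutually dependent invariants maintained by induction, with monotonicity of $minCount$ resolving the eviction case, is a clean and complete rendering of that argument, including the correct handling of the not-yet-full phase (all counts exact) and of the endgame bound on both sides of the error.
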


The \textbf{residual error} guarantee is shown in Lemma~\ref{spacesaving residual error bound}. 
\begin{lemma} \label{spacesaving residual error bound} \cite{berinde2010space} 
    After processing insertion-only data stream $\sigma$, SpaceSaving with $O(\frac{k}{\epsilon})$ counters ensure $\forall x \in U, |f(x) - \hat{f}(x)| \leq \frac{\epsilon}{k} F_{1}^{res(k)}$ 
    where $F_{1}^{res(k)} = F_1 - \sum_{i=1}^{k} f_i$.
\end{lemma}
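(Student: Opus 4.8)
The plan is to reduce everything to a sharp upper bound on the minimum counter value $minCount$ of the final summary, exploiting the standard SpaceSaving fact that the estimation error of any item is at most $minCount$. First I would recall the two structural properties of SpaceSaving run with $m$ counters on an insertion-only stream: (i) each insertion raises the total by exactly one, so $\sum_{x \in S} count_x = F_1$; and (ii) the algorithm never underestimates a monitored item ($count_x \geq f(x)$), while every \emph{unmonitored} item satisfies $f(x) \leq minCount$ (it was last evicted with the then-minimum count, which upper-bounds its true frequency, and $minCount$ only grows over time). Property (ii) yields the uniform guarantee $|f(x) - \hat f(x)| \leq minCount$ for all $x \in U$, so it suffices to show $minCount \leq \frac{\epsilon}{k} F_1^{res(k)}$ when $m = O(k/\epsilon)$.

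The key step, and the main obstacle, is a majorization claim: sorting the counters as $count_{(1)} \geq \cdots \geq count_{(m)}$, we have $count_{(j)} \geq f_j$ for every $j \leq k$. I would prove this by fixing $j$ and inspecting the $j$ items of largest true frequency. If all $j$ of them are monitored, then by property (ii) their counters are each at least $f_j$, exhibiting $j$ counters of value $\geq f_j$ and forcing $count_{(j)} \geq f_j$. Otherwise some top-$j$ item $x$ is unmonitored, whence $f_j \leq f(x) \leq minCount \leq count_{(j)}$, since every counter is at least the minimum. Either way $count_{(j)} \geq f_j$.

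Summing the claim over $j = 1, \ldots, k$ gives $\sum_{j=1}^{k} count_{(j)} \geq \sum_{i=1}^{k} f_i$. Subtracting this from property (i), the remaining $m-k$ counters satisfy $\sum_{j>k} count_{(j)} \leq F_1 - \sum_{i=1}^{k} f_i = F_1^{res(k)}$. Since $minCount$ is the smallest counter, it is at most the average of these $m-k$ smallest counters, so $minCount \leq F_1^{res(k)}/(m-k)$. Choosing $m = k/\epsilon + k = O(k/\epsilon)$ makes $m-k = k/\epsilon$, yielding $minCount \leq \frac{\epsilon}{k} F_1^{res(k)}$, and the lemma then follows from the uniform error guarantee. (The degenerate case in which the summary never fills is immediate, as then every monitored item is counted exactly and every other item has frequency zero, so the error vanishes.)
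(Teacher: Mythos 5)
Your proof is correct, but note that the paper offers no proof of Lemma~\ref{spacesaving residual error bound} at all: it is quoted from Berinde et al.~\cite{berinde2010space}, so your argument can only be compared against that known analysis and against the paper's closest internal analogue. Your route --- bound every item's error by $minCount$, then bound $minCount$ by combining (a) the counter-sum identity $\sum_{x \in S} count_x = F_1$, (b) the majorization claim $count_{(j)} \geq f_j$ for all $j \leq k$, and (c) averaging over the $m-k$ smallest counters --- is exactly the skeleton the paper itself re-uses later for Integrated SpaceSaving$\pm$ in Lemma~\ref{lemma, iss-insert-mincount} and Theorem~\ref{ss+- residual error}. The difference is that there the paper justifies the analogue of step (b) with the single remark that the $k$ largest insert counts must dominate $\sum_{i=1}^{k} f_i$ ``since no monitored items are underestimated,'' which silently ignores the possibility that one of the $k$ most frequent items is \emph{not} in the summary; your two-case argument (all top-$j$ items monitored, versus some top-$j$ item unmonitored and hence of frequency at most $minCount \leq count_{(j)}$) is precisely what closes that hole, and it is the same device used in~\cite{berinde2010space}. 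One presentational nit: your property (ii) as literally stated (no underestimation of monitored items, plus $f(x)\leq minCount$ for unmonitored ones) does not by itself give $|f(x)-\hat f(x)|\leq minCount$ for \emph{monitored} $x$; for that direction you also need the standard fact that a monitored item's counter exceeds its true frequency by at most the value of the minimum counter at the time of its last (re)insertion, which is at most the final $minCount$ by monotonicity of the minimum. You do invoke this standard fact in your opening sentence, so the proof stands, but the sentence claiming that ``(ii) yields the uniform guarantee'' should be reworded to cite that overestimation bound explicitly rather than derive it from (ii).
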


The \textbf{relative error} guarantee is shown in Lemma~\ref{spacesaving relative error bound}. 

\begin{definition}[$\gamma$-decreasing]
\label{gamma-decreasing}
Let $\gamma>1$, then a non-decreasing function with domain $\{1,2...n\}$ is $\gamma$-decreasing if for all t such that $1 \leq \gamma t \leq n$, $f_{\lceil \gamma t \rceil} \leq f_t/2$
\end{definition}

\begin{lemma} \label{spacesaving relative error bound} \cite{RC23} 
    After processing insertion-only data stream $\sigma$ and assume the exact frequencies follow the $\gamma$-decreasing function (Definition~\ref{gamma-decreasing}),
    SpaceSaving with $O(\frac{k}{\epsilon} \frac{2 -\gamma}{2(\gamma-1)})$ counters ensure $|f_i-\hat{f}_i| \leq \epsilon f_{i}, \forall i \leq k$.
\end{lemma}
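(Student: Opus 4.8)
The plan is to derive the relative-error guarantee from the residual (tail) error bound of Lemma~\ref{spacesaving residual error bound}, using the $\gamma$-decreasing hypothesis only to control the tail mass $F_1^{res(k)}=\sum_{j>k}f_j$ in terms of $f_k$. Concretely, I would run SpaceSaving with a residual-accuracy parameter $\epsilon'$ to be fixed later, so that by Lemma~\ref{spacesaving residual error bound}, with $m=O(k/\epsilon')$ counters every item satisfies $|f_i-\hat f_i|\le \frac{\epsilon'}{k}F_1^{res(k)}$. This bound is uniform in $i$, so the entire burden is to show that $\frac{1}{k}F_1^{res(k)}$ is comparable to $f_i$ for each $i\le k$, for which the binding case is the smallest top-$k$ frequency $f_k$.

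The heart of the argument is a geometric-block bound on the tail. I would partition the ranks $j>k$ into blocks $B_t=\{\,j:\gamma^{t}k<j\le \gamma^{t+1}k\,\}$ for $t\ge 0$. Each block contains at most $(\gamma-1)\gamma^{t}k$ indices, and iterating the $\gamma$-decreasing inequality $f_{\lceil\gamma s\rceil}\le f_s/2$ exactly $t$ times gives $f_j\le f_k/2^{t}$ for every $j\in B_t$ (up to the usual rounding in the ceilings, which I would absorb into the constant). Summing the per-block contributions gives
\[
F_1^{res(k)}=\sum_{t\ge 0}\sum_{j\in B_t}f_j\;\le\;(\gamma-1)\,k\,f_k\sum_{t\ge 0}\Big(\tfrac{\gamma}{2}\Big)^{t}.
\]
The geometric series converges \emph{precisely} because $\gamma<2$, to $\tfrac{2}{2-\gamma}$, yielding $F_1^{res(k)}\le \tfrac{2(\gamma-1)}{2-\gamma}\,k\,f_k$. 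This is the one place the skew hypothesis enters, and the blow-up of the factor as $\gamma\to 2^-$ correctly reflects that slowly-decaying streams cannot admit a relative guarantee in small space.

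Finally I would combine the two estimates. Since $f_k\le f_i$ for all $i\le k$, the tail bound gives $\frac{\epsilon'}{k}F_1^{res(k)}\le \epsilon'\,\tfrac{2(\gamma-1)}{2-\gamma}\,f_i$, so it suffices to choose the residual parameter $\epsilon'=\Theta\!\big(\epsilon\,\tfrac{2-\gamma}{\gamma-1}\big)$ to obtain $|f_i-\hat f_i|\le \epsilon f_i$ for every $i\le k$. The counter budget is then $O(k/\epsilon')=O\!\big(\tfrac{k}{\epsilon}\cdot\tfrac{\gamma-1}{2-\gamma}\big)$, i.e. of the claimed order $k/\epsilon$ with a skew-dependent factor controlled by $1/(2-\gamma)$ (this adjustment of the residual parameter is exactly where the $(2-\gamma)/(\gamma-1)$-type factor of the statement originates). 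I expect the main obstacle to be the tail estimate itself: handling the ceilings in the iterated $\gamma$-decreasing bound cleanly, and confirming that the geometric summation genuinely requires $\gamma<2$ (so the guarantee is vacuous for flatter distributions). The remaining step of applying Lemma~\ref{spacesaving residual error bound} at level $k$ to the very items $i\le k$ we are estimating is safe, since that lemma already holds for all $x\in U$ simultaneously.
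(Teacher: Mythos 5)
Your plan is essentially the proof behind the cited result, and it is also exactly how this paper proves its own bounded-deletion analogues: the paper never reproduces a proof of Lemma~\ref{spacesaving relative error bound} (it defers to \cite{RC23}, and even the tail lemma it quotes from there is stated with proof omitted), but the proofs of Theorems~\ref{dss+- relative error} and~\ref{ss+- relative error} follow precisely your two steps --- a residual-type bound of the form $|f_i-\hat f_i|\le F_1^{res(k)}/(m-k)$ combined with a geometric-block estimate of the tail under the $\gamma$-decreasing hypothesis. Your block computation giving $F_1^{res(k)}\le \frac{2(\gamma-1)}{2-\gamma}\,k\,f_k$ is the content of the paper's omitted lemma (whose printed factor $\frac{2(\gamma-1)}{\gamma-1}$ is a typo; the identity $F_1\le f_1\frac{\gamma}{2-\gamma}$ that follows it confirms the denominator $2-\gamma$), and your remark that convergence of the geometric series is exactly where $\gamma<2$ is needed is correct. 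The ceiling bookkeeping you defer is genuinely needed but harmless: defining the block boundaries by the iterated ceilings $k_{t+1}=\lceil\gamma k_t\rceil$ rather than by $\gamma^t k$ makes it clean, at the cost of lower-order terms.

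The one thing you glossed over, and should not have: the counter budget you derive, $O\bigl(\frac{k}{\epsilon}\cdot\frac{\gamma-1}{2-\gamma}\bigr)$, is not ``of the claimed order'' --- it is the \emph{reciprocal} of the factor $\frac{2-\gamma}{2(\gamma-1)}$ appearing in the statement, and the two are not interchangeable, since their ratio grows without bound as $\gamma\to 2^-$. Your factor is the correct one: slower decay ($\gamma\to 2^-$) must cost more counters, which $\frac{2(\gamma-1)}{2-\gamma}$ reflects and $\frac{2-\gamma}{2(\gamma-1)}$ does not. The paper's own proofs side with you: Theorems~\ref{dss+- relative error} and~\ref{ss+- relative error} set $m = k + \frac{2(\gamma-1)}{2-\gamma}\cdot\frac{k^{\beta+1}}{2^{\log_\gamma k}}\cdot\frac{\alpha}{\epsilon}$ in their proofs, even though their statements (and Table~\ref{tab:algorithm-comparison}) display the inverted fraction. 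So the mismatch is a typo propagated through the statements, not an error in your derivation --- but a careful write-up should flag the inversion explicitly rather than absorb it into the phrase ``$(2-\gamma)/(\gamma-1)$-type factor,'' which conflates a quantity with its reciprocal.
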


\begin{algorithm}[]
\SetAlgoLined
    \KwIn{SpaceSaving summary $S$ and an item $e$}
    \KwOut{$e$'s estimated frequency}
        \uIf{$e \in S$}{
            return $count_{e}$\;
        }
        return 0\;
 \caption{Query Algorithm}
 \label{SpaceSaving query}
\end{algorithm}

\medskip

In 2021, Zhao et al.~\cite{zhao2021spacesaving} proposed SpaceSaving$\pm$ to extend the SpaceSaving algorithm from the insertion-only model to the bounded deletion model. {In that model, }they established a space lower-bound of { at least $\frac{\alpha}{\epsilon}$} counters required to solve the frequency estimation and heavy hitters problems, and \claire{proposed an algorithm,} SpaceSaving$\pm$ (see Algorithm~\ref{SpaceSaving+/- algorithm}), that used space matching the lower bound. The update algorithm is shown in Algorithm~\ref{SpaceSaving+/- algorithm} in which insertions are dealt with as in  Algorithm~\ref{SpaceSaving algorithm}, deletions of monitored items decrement the corresponding counters, and deletions of unmonitored items are ignored. The query algorithm is identical to Algorithm~\ref{SpaceSaving query}\footnote{There are a couple of variants of Algorithm~\ref{SpaceSaving query}. We adopt the most commonly used approach.}.

\begin{algorithm}[]
\SetAlgoLined
    \KwIn{Bounded deletion data stream $\sigma$, and SpaceSaving$\pm$ summary $S$}
    \KwOut{SpaceSaving$\pm$ Summary $S$}
    \For{ (item $e$, operation $op$) from $\sigma$}{
        \uIf{$op$ is an insertion}{
            follow Algorithm~\ref{SpaceSaving algorithm} to process $e$\;
        }
        \uElse{
            \uIf{$e \in S$}{
                $count_{e} -= 1$\;
            }
        }
    }
 \caption{SpaceSaving$\pm$ Update Algorithm (bounded deletion)}
 \label{SpaceSaving+/- algorithm}
\end{algorithm}

However, the SpaceSaving$\pm$ algorithm \claire{is only correct if} an implicit assumption holds~\cite{zhao2023errata}, \claire{namely,} that deletions can only happen after all the insertions, so that interleaving insertions and deletions are not allowed, as shown in Lemma~\ref{spacesaving+/- guarantee}. \claire{In this paper}, we present new algorithms \claire{in the general bounded deletions model}, that allow interleavings between insertions and deletions while \claire{still} using \claire{only} $O(\frac{\alpha}{\epsilon})$ space.

This Lemma correspond to Theorem~2 in~\cite{zhao2021spacesaving}.
Consider the bounded deletions model and assume, in addition, that the stream consists of an insertion phase with $I$ insertions followed by a deletion phase with $D$ deletions (no interleaving).
\begin{lemma} \label{spacesaving+/- guarantee}
\claire{After processing  data stream $\sigma$ with the SpaceSaving$\pm$ Update algorithm (Algorithm~\ref{SpaceSaving+/- algorithm})  with $m=\frac{\alpha}{\epsilon}$ counters, the SpaceSaving Query algorithm (algorithm~\ref{SpaceSaving query}) provides an estimate $\hat f$ such that}  $\forall x \in U, |f(x) - \hat{f}(x)| \leq \frac{F_1}{m}$, where $F_1 = \sum_{i=1}^{n} f_i$.
\end{lemma}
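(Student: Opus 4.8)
The plan is to use the no-interleaving hypothesis to separate the stream into its insertion phase and deletion phase, analyze the insertion phase with the known guarantee, and then track how the deletion phase brings the relevant error quantity down from the $I/m$ regime to the target $F_1/m$. First I would consider the insertion phase in isolation. Since it is a pure insertion-only stream of total length $I$, Lemma~\ref{spacesaving error bound} applies verbatim. Let $\mu_0$ denote the minimum counter value at the end of this phase. The standard \textsf{SpaceSaving} invariants then give three facts: every monitored item $x$ overestimates, with $c_x - I(x) \le \mu_0$; every unmonitored item $y$ has $I(y) \le \mu_0$; and the counters sum to exactly $I$, because each insertion (whether it increments an existing counter or replaces the minimum one by $\mu_0+1$) changes the total counter mass by exactly $+1$. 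In particular $\mu_0 \le I/m$, which would only yield the weaker $I/m$-bound and is therefore \emph{not} where the argument should stop.

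Next I would analyze the deletion phase. The decisive structural consequence of no-interleaving is that Algorithm~\ref{SpaceSaving+/- algorithm} performs \emph{no evictions} after deletions begin: the monitored set is frozen, and each monitored counter is simply decremented once per deletion of that item. Hence at query time, for a monitored $x$ the \textsf{SpaceSaving} Query algorithm (Algorithm~\ref{SpaceSaving query}) returns $\hat f(x)=c_x = f(x) + \bigl(c_x^{\mathrm{ins}}-I(x)\bigr)$, so the signed error equals the \emph{frozen} insertion-phase overestimate, while for an unmonitored $y$ we have $\hat f(y)=0$ and $f(y)\le I(y)$. The goal is to bound both of these by $F_1/m$. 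The route I would take is to control every error by the minimum counter value $\mu$ \emph{at query time} rather than by $\mu_0$, and separately to prove $\mu \le F_1/m$; combining $|f(x)-\hat f(x)| \le \mu$ for all $x$ with $\mu \le F_1/m$ yields the statement, after which substituting $m=\alpha/\epsilon$ recovers the $\epsilon F_1/\alpha$ form if desired.

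The hard part is precisely this middle step: pushing the bound from $I/m$ down to $F_1/m$. The insertion-phase invariant ``overestimate $\le$ current minimum counter'' is preserved under increments but is fragile under decrements, since deleting a light monitored item lowers $\mu$ without lowering the large frozen overestimate attached to a different item. Moreover, the counter mass at the end is $I$ minus only the deletions that landed on monitored items, whereas the target $F_1=I-D$ subtracts \emph{all} deletions; reconciling the deletions of unmonitored items (which Algorithm~\ref{SpaceSaving+/- algorithm} ignores) with the desired $F_1/m$ scaling is the crux of the proof and is exactly the point where the no-interleaving hypothesis must do the heavy lifting and where the interleaved case breaks down. I would isolate this as a separate invariant lemma—stating that throughout the deletion phase both the surviving monitored overestimates and the residual frequencies of unmonitored items remain bounded by the current minimum counter, and that this minimum is bounded by the current counter mass divided by $m$—and prove it by induction over the deletion phase, using the bounded-deletion constraint $D \le (1-1/\alpha)I$ together with the frozen-summary structure to charge the error against $F_1$ rather than $I$.
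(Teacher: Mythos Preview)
First, note that the paper does \emph{not} give its own proof of this lemma: it is stated with the attribution ``This Lemma correspond to Theorem~2 in~\cite{zhao2021spacesaving}'' and no argument follows. So there is no in-paper proof to compare against; the relevant question is whether your plan actually establishes the stated bound.

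Your decomposition into an insertion phase (analyzed by Lemma~\ref{spacesaving error bound}) and a frozen-summary deletion phase is exactly right, and your observation that no evictions occur once deletions begin, so that the monitored set and each monitored item's overestimate $c_x^{\mathrm{ins}}-I(x)$ are frozen, is the key structural fact. This already gives $|f(x)-\hat f(x)|\le I/m$ for every $x$: for monitored $x$ both $c_x$ and $f(x)$ drop by $D(x)$, leaving the error unchanged at its insertion-phase value $\le\mu_0\le I/m$; for unmonitored $y$ the error is $f(y)\le I(y)\le\mu_0\le I/m$.

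The gap is in your final step. You propose to strengthen $I/m$ to $F_1/m$ via an invariant that the current minimum counter $\mu$ still dominates every overestimate during the deletion phase, and that $\mu\le F_1/m$. But this invariant is false, and in fact the bound $F_1/m$ as literally written cannot hold. Take $m=2$; insert $a,b,a,b,c$ so that after the insertion phase the summary is $\{b:2,\ c:3\}$ with $I(c)=1$ and $\mu_0=2$. Now delete $b$ twice. Then $F_1=3$, $F_1/m=1.5$, yet $\hat f(c)=3$, $f(c)=1$, error $=2>1.5$. The deletion of a light monitored item drags $\mu$ (and $F_1$) down while leaving $c$'s frozen overestimate untouched, exactly the fragility you yourself flagged; no amount of bounded-deletion bookkeeping repairs it, because the counterexample already satisfies $D\le(1-1/\alpha)I$ for any $\alpha\ge 5/3$.

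The resolution is that the lemma as printed contains a typo inherited from the format of Lemma~\ref{spacesaving error bound}: the intended guarantee (and what Theorem~2 of~\cite{zhao2021spacesaving} actually yields) is $|f(x)-\hat f(x)|\le \epsilon F_1$, equivalently $\alpha F_1/m$, not $F_1/m$. That bound follows immediately from your own $I/m$ conclusion together with $I\le\alpha F_1$ and $m=\alpha/\epsilon$. So your ``hard part'' is chasing a target that is both unreachable and unnecessary; the correct proof stops one step earlier than you planned.
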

\section{New Algorithms}
\label{sec-alg}
In this section, we propose new algorithms supporting the bounded deletion model (with interleavings between insertions and deletions, while an item's frequency is never negative). Double SpaceSaving$\pm$ and Integrated SpaceSaving$\pm$ both use $O(\frac{\alpha}{\epsilon})$ space to solve the deterministic frequency estimation and heavy hitters problems in the bounded deletion model. We extended Double SpaceSaving$\pm$ to support unbiased estimations. Also, we establish the correctness proof for these algorithms.
While these algorithm share similar characteristics, they represent different trade-offs. In particular,  Double SpaceSaving$\pm$ uses two disjoint summaries, one to track insertions and the other to track deletions, while Intergrated SpaceSaving$\pm$ uses a single summary to track both deletions and insertions.


\subsection{Double SpaceSaving$\pm$}

\begin{algorithm}[]
\SetAlgoLined
    \KwIn{Bounded deletion data stream $\sigma$, and Double SpaceSaving$\pm$ summary ${(S_{insert},S_{delete})}$}
    \KwOut{Double SpaceSaving$\pm$ Summary $S{=(S_{insert},S_{delete})}$}
   
    \For{(item $e$, operation $op$) from $\sigma$}{
        \uIf{$op$ is an insertion}{
            {Apply Algorithm~\ref{SpaceSaving algorithm} to insert $e$ in $S_{insert}$}\;
        }
        \uElse{
            {Apply Algorithm~\ref{SpaceSaving algorithm} to insert $e$ in $S_{delete}$}\;
        }
    } 
  \caption{Double SpaceSaving$\pm$ Update Algorithm}
  \label{double-spacesaving-update}
\end{algorithm}

\begin{algorithm}[]
\SetAlgoLined
    \KwIn{Double SpaceSaving$\pm$ summary $S$ and an item $e$}
    \KwOut{$e$'s estimated frequency.}
    inserts = {Query $S_{insert}$ for $e$ with Algorithm~\ref{SpaceSaving query}}\;
    deletes = {Query $S_{delete}$ for $e$ with Algorithm~\ref{SpaceSaving query}}\;
        {return max(inserts - deletes, 0)}\;
 \caption{Double SpaceSaving$\pm$ Query Algorithm}
 \label{double-spaceSaving-query}
\end{algorithm}

In Double SpaceSaving$\pm$, we employ two SpaceSaving summaries to track insertions ($S_{insert}$) and deletions ($S_{delete}$) separately, using $m_{I}$ and $m_{D}$ counters respectively (the algorithms is parameterized by the choice of $m_I$ and $m_D$). As shown in Algorithm~\ref{double-spacesaving-update}, Double SpaceSaving$\pm$ completely partitions the update operations to two independent summaries.
To query for the frequency of an item $e$, both summaries need to be queried to estimate the frequency of item $e$ as shown in Algorithm~\ref{double-spaceSaving-query}.

\begin{theorem} \label{dss theorem: frequency estimation}
After processing a bounded deletion data stream $\sigma$, Double SpaceSaving$\pm$ following Update Algorithm~\ref{double-spacesaving-update} with $m_I=O(\frac{\alpha}{\epsilon})$ and $m_D=O(\frac{\alpha-1}{\epsilon})$  and Query Algorithm~\ref{double-spaceSaving-query} provides an estimate $\hat{f}$ such that $\forall x \in U, |f(x) - \hat{f}(x)| < \epsilon F_1$.
\end{theorem}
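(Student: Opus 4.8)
The plan is to reduce the analysis to two independent applications of the insertion-only SpaceSaving bound and then combine the two errors. Since the Update algorithm routes every insertion into $S_{insert}$ and every deletion into $S_{delete}$, each summary is literally an instance of ordinary insertion-only SpaceSaving run on a substream. Writing $\hat I(x)$ and $\hat D(x)$ for the counts returned by querying $S_{insert}$ and $S_{delete}$ with Algorithm~\ref{SpaceSaving query}, I would apply Lemma~\ref{spacesaving error bound} to each substream. The total weight of the insertion substream is exactly $I$ and that of the deletion substream is exactly $D$, so the lemma gives $|I(x)-\hat I(x)|\le I/m_I$ and $|D(x)-\hat D(x)|\le D/m_D$ for every $x\in U$.

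First I would bound the error of the \emph{unclamped} estimate $\tilde f(x)=\hat I(x)-\hat D(x)$. Because $f(x)=I(x)-D(x)$, the triangle inequality yields
\[
|f(x)-\tilde f(x)| \;\le\; |I(x)-\hat I(x)| + |D(x)-\hat D(x)| \;\le\; \frac{I}{m_I}+\frac{D}{m_D}.
\]
Substituting $m_I=\tfrac{2\alpha}{\epsilon}$ and $m_D=\tfrac{2(\alpha-1)}{\epsilon}$, both within the stated $O(\cdot)$ budgets, turns the right-hand side into $\tfrac{\epsilon}{2}\bigl(\tfrac{I}{\alpha}+\tfrac{D}{\alpha-1}\bigr)$.

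Next I would invoke the bounded-deletion constraint $D\le(1-1/\alpha)I$ twice. Dividing it by $\alpha-1$ gives $\tfrac{D}{\alpha-1}\le\tfrac{I}{\alpha}$, so the bracketed sum is at most $\tfrac{2I}{\alpha}$ and the error is at most $\tfrac{\epsilon I}{\alpha}$. The same constraint also gives $F_1=I-D\ge I-\tfrac{\alpha-1}{\alpha}I=\tfrac{I}{\alpha}$, i.e. $\tfrac{I}{\alpha}\le F_1$, whence $|f(x)-\tilde f(x)|\le\epsilon F_1$ (with strict inequality obtained by enlarging the constants infinitesimally, which the $O(\cdot)$ notation permits). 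Finally, I would argue that clamping at zero, i.e. replacing $\tilde f(x)$ by $\hat f(x)=\max(\tilde f(x),0)$ as in Algorithm~\ref{double-spaceSaving-query}, never increases the error: since $f(x)\ge0$ in this model, projecting $\tilde f(x)$ onto $[0,\infty)$ can only move it closer to $f(x)$, so $|f(x)-\hat f(x)|\le|f(x)-\tilde f(x)|\le\epsilon F_1$ for all $x$.

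The step I expect to be the crux is the constant bookkeeping in the combination. A naive choice $m_I=\alpha/\epsilon$, $m_D=(\alpha-1)/\epsilon$ only yields $2\epsilon F_1$, so the whole argument hinges on using $D\le(1-1/\alpha)I$ in two distinct ways: once to balance the deletion error against the insertion error (via $\tfrac{D}{\alpha-1}\le\tfrac{I}{\alpha}$), and once to relate the insertion mass $I$ back to $F_1$ (via $F_1\ge I/\alpha$), together with absorbing the factor of two into the $O(\cdot)$ counter budget. The clamping observation is routine, and the per-summary bounds are immediate from Lemma~\ref{spacesaving error bound}, so the algebraic balancing is where the care is required.
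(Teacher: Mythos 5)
Your proof is correct and follows essentially the same route as the paper's: apply Lemma~\ref{spacesaving error bound} to each of the two insertion-only substreams, combine via the triangle inequality, and choose $m_I = \frac{2\alpha}{\epsilon}$, $m_D = \frac{2(\alpha-1)}{\epsilon}$ so that the bounded-deletion constraint (used equivalently in the paper as $I \le \alpha F_1$ and $D \le (\alpha-1) F_1$) yields the $\epsilon F_1$ bound. Your explicit handling of the clamping step $\hat f(x)=\max(\tilde f(x),0)$ is a small addition of rigor that the paper's proof omits, but it does not constitute a different approach.
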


\begin{proof}
    For every items in the universe their frequency estimation error is upper bounded by $\frac{I}{m_{I}} + \frac{D}{m_{D}}$ due to the property of SpaceSaving as shown in Lemma~\ref{spacesaving error bound} and triangle inequality. 
    In addition, from the definition of bounded deletion, we can derive the relationship between $I$, $D$, and $F_1$. Namely, $I \leq \alpha F_1$ and $D \leq (\alpha-1) F_1$. 
    Set $m_{I} = \frac{2\alpha}{\epsilon}$ and $m_{D}=\frac{2(\alpha-1)}{\epsilon}$, 
    then $\frac{I}{m_{I}} + \frac{D}{m_{D}} \leq \epsilon F_1$. 
    Hence, Double SpaceSaving$\pm$ solve the deterministic frequency estimation problem in the bounded deletion model with $O(\frac{\alpha}{\epsilon})$ space.
\end{proof}

\begin{theorem}
\label{dss theorem: Frequent}
Double SpaceSaving$\pm$ solves the heavy hitters problem in the bounded deletion model using $O(\frac{\alpha}{\epsilon})$ space by reporting all items monitored in $S_1$.
\end{theorem}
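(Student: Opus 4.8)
The plan is to argue that every heavy hitter is necessarily monitored in the insertion summary $S_{insert}$ (denoted $S_1$ in the statement), so that outputting all of its monitored items loses no heavy hitter. The intuition is that deletions can only \emph{decrease} an item's frequency, so a large net frequency forces an even larger insertion count, which $S_{insert}$ is guaranteed to track.

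First I would recall the definitions: an item $x$ is a heavy hitter when $f(x) \geq \epsilon F_1$, and since $f(x) = I(x) - D(x)$ with $D(x) \geq 0$, any heavy hitter satisfies $I(x) \geq f(x) \geq \epsilon F_1$. Next I would observe that $S_{insert}$ is nothing but the ordinary insertion-only SpaceSaving summary run on the substream of the $I$ insertions; the interleaving with deletions is irrelevant, since $S_{insert}$ never sees a deletion. Hence Lemma~\ref{spacesaving error bound}, applied with the insertion count $I$ playing the role of the total count, guarantees that for every $x$ the estimated insertion count differs from $I(x)$ by at most $I/m_I$. In particular, any item $x$ that is \emph{not} monitored has estimated insertion count $0$ and therefore $I(x) \leq I/m_I$.

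The calculation then closes the argument. Taking $m_I = 2\alpha/\epsilon$ as in Theorem~\ref{dss theorem: frequency estimation}, and using $I \leq \alpha F_1$ from the bounded deletion assumption, an unmonitored item satisfies $I(x) \leq I/m_I \leq \alpha F_1 \cdot \tfrac{\epsilon}{2\alpha} = \tfrac{\epsilon}{2} F_1$. This contradicts $I(x) \geq \epsilon F_1$ for a heavy hitter, so every heavy hitter must be monitored in $S_{insert}$. Reporting all items monitored in $S_{insert}$ therefore yields a set of size at most $m_I = O(\alpha/\epsilon)$ that contains every heavy hitter, which is exactly what the heavy hitters problem requires, and the total space (including $O((\alpha-1)/\epsilon)$ for $S_{delete}$) remains $O(\alpha/\epsilon)$.

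I do not expect a genuine obstacle here; the one point requiring care is the reduction to the insertion-only setting — making explicit that $S_{insert}$ processes only insertions, so that Lemma~\ref{spacesaving error bound} applies verbatim with total count $I$, and noting that the heavy hitters definition demands only a superset of the true heavy hitters (no false-positive exclusion is required), so that simply dumping $S_{insert}$ suffices without certifying each reported item.
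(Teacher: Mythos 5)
Your proof is correct, and it takes a mildly but genuinely different route from the paper's. The paper argues by contradiction directly from the already-established combined guarantee: if a heavy hitter $x$ were unmonitored in $S_1$, its estimate under the Query Algorithm would be $0$, so the error would be $f(x) \geq \epsilon F_1$, contradicting the strict bound $|f(x)-\hat f(x)| < \epsilon F_1$ of Theorem~\ref{dss theorem: frequency estimation}. You never invoke that theorem or the deletion summary at all: you reduce to the insertion substream, apply Lemma~\ref{spacesaving error bound} to $S_{insert}$ alone to conclude that any unmonitored item has $I(x) \leq I/m_I \leq \epsilon F_1/2$, and contradict $I(x) \geq f(x) \geq \epsilon F_1$, which uses the elementary observation that deletions only lower net frequency. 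Both arguments are sound. The paper's is shorter and modular, reusing the theorem it just proved; yours is more self-contained and makes visible two facts the paper's proof leaves implicit: the recall guarantee depends only on $S_{insert}$ (the deletion summary plays no role in which items get reported), and the separation comes with a factor-two margin ($\epsilon F_1/2$ versus $\epsilon F_1$), so the conclusion would survive even a weaker choice of $m_I$. Your closing remarks --- that $S_{insert}$ is literally an insertion-only SpaceSaving instance so Lemma~\ref{spacesaving error bound} applies verbatim with total count $I$, and that the problem only asks for a superset of the heavy hitters --- are exactly the right points of care, and both are consistent with the paper's definitions.
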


\begin{proof} Assume a heavy hitter $x$ is not monitored in the summary. By definition of heavy hitters, $f(x) \geq \epsilon F_1$. Since $x$ is not monitored, its estimated frequency 0. As a result, the estimation error for item $x$ is larger than $\epsilon F_1$ which contradicts Theorem~\ref{dss theorem: frequency estimation}.
  
Hence, by contradiction Double SpaceSaving$\pm$ solves the heavy hitters problem.
\end{proof}

Being more careful and using variants of Algorithm~\ref{SpaceSaving query} would lead to improvements but would not change the theoretical bound of Theorem~\ref{dss theorem: Frequent}.

\subsubsection{Unbiased Double SpaceSaving$\pm$}
\label{sec:udss}
Unbiased estimation is a desirable property for robust estimation~\cite{charikar2002finding, duffield2007priority}. Research has shown that SpaceSaving can be extend to provide unbiased estimation (\claire{i.e.}, $E[\hat{f}(x)] = f(x)$)~\cite{ting2018data}. 
\claire{The extension is obtained from Algorithm 1 by the following simple modification: after line 8 of the code, with probability $1/(w+1)$ lines 9 and 10 are executed; with the complementary probability $1-1/(w+1)$, instead of lines 9 and 10 the algorithm simply increments $count_{minItem}$.} This indicates that we can replace the two independent deterministic SpaceSaving summaries with two independent unbiased SpaceSaving summaries in Algorithm~\ref{double-spaceSaving-query} to support unbiased frequency estimation in the bounded deletion model. 

\begin{theorem}
\label{dss theorem: unbiased}
After processing a bounded deletion data stream $\sigma$, Unbiased Double SpaceSaving$\pm$ of space $O\frac{\alpha}{\epsilon}$, with two independent unbiased SpaceSaving summaries in Algorithm~\ref{double-spaceSaving-query}, provides unbiased frequency estimation and the estimation variance is upper bounded by $\epsilon^2 F_1^2$.
\end{theorem}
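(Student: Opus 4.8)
The plan is to reduce the claim to two independent per-summary guarantees, exploiting the fact that Unbiased Double SpaceSaving$\pm$ maintains one unbiased SpaceSaving summary $S_{insert}$ fed only the insertions and a second, independent one $S_{delete}$ fed only the deletions, producing unbiased estimates $\hat I(x)$ and $\hat D(x)$ of $I(x)$ and $D(x)$. A preliminary observation I would record is that, to preserve unbiasedness, the query must return $\hat f(x)=\hat I(x)-\hat D(x)$ \emph{directly}, without the truncation $\max(\cdot,0)$ of Algorithm~\ref{double-spaceSaving-query}, since truncation would bias the estimate upward by Jensen's inequality. The single-summary unbiasedness $E[\hat I(x)]=I(x)$ and $E[\hat D(x)]=D(x)$ is exactly the guarantee of the unbiased SpaceSaving construction of Ting~\cite{ting2018data}, which I would invoke as a black box. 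Overall unbiasedness then follows immediately from linearity of expectation across the two summaries: $E[\hat f(x)]=E[\hat I(x)]-E[\hat D(x)]=I(x)-D(x)=f(x)$ for every $x\in U$.

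For the variance, the first step is to use that $S_{insert}$ and $S_{delete}$ are driven by disjoint portions of the stream and share no randomness, so $\hat I(x)$ and $\hat D(x)$ are independent; hence $\mathrm{Var}(\hat f(x))=\mathrm{Var}(\hat I(x))+\mathrm{Var}(\hat D(x))$. The crux is then a per-summary second-moment bound of the form $\mathrm{Var}(\hat I(x))\le (I/m_I)^2$ and $\mathrm{Var}(\hat D(x))\le (D/m_D)^2$. Granting these, I would finish with the same arithmetic as in Theorem~\ref{dss theorem: frequency estimation}: using $I\le\alpha F_1$, $D\le(\alpha-1)F_1$, and the choices $m_I=2\alpha/\epsilon$, $m_D=2(\alpha-1)/\epsilon$ (both $O(\alpha/\epsilon)$), each term is at most $(\epsilon F_1/2)^2$, so $\mathrm{Var}(\hat f(x))\le \epsilon^2F_1^2/2<\epsilon^2F_1^2$.

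I expect the per-summary variance bound to be the main obstacle, and it is worth flagging that it cannot be obtained by the shortcut of squaring a pointwise error bound. Unlike deterministic SpaceSaving, the unbiased variant does \emph{not} enjoy a deterministic per-item error bound: an item may be rejected on every arrival and receive estimate $0$, so that $|\hat I(x)-I(x)|$ can exceed $I/m_I$ even though the mean error is zero. The bound must instead come from the genuine variance analysis of the unbiased scheme, viewing each eviction coin flip (executed with inclusion probability $1/(w+1)$, where $w$ is the current minimum counter) as a mean-zero martingale increment, equivalently a priority/threshold-sampling step, and controlling the sum of conditional variances by the minimum counter value, whose magnitude is governed by $I/m_I$. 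I would therefore cite the variance analysis of Ting~\cite{ting2018data} for this per-summary bound, or reprove it by tracking the contribution of each coin flip to $\hat I(x)-I(x)$ and bounding each per-step conditional variance by the current minimum count.
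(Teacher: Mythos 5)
Your proposal is correct and takes essentially the same route as the paper's proof: invoke Ting's per-summary variance bound as a black box, use independence of the two summaries to kill the covariance term, and finish with the same parameter choices and arithmetic as Theorem~\ref{dss theorem: frequency estimation} (your bound of $\epsilon^2F_1^2/2$ versus the paper's $(I/m_I+D/m_D)^2\leq\epsilon^2F_1^2$ is an immaterial difference). Your side observations --- that the query must return $\hat I(x)-\hat D(x)$ without the $\max(\cdot,0)$ truncation to preserve unbiasedness, and that the per-summary variance bound cannot be derived by squaring a pointwise error bound since the unbiased variant has no deterministic per-item guarantee --- are both valid refinements that the paper glosses over.
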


\begin{proof} 
The variance of an Unbiased SpaceSaving's estimation is upper bounded by $O(\frac{F_1}{m}^2)$~\cite{ting2018data}. We know that $Var[X-Y]$ = $Var[X]$ + $Var[Y]$ - $2(Cov[X,Y]$, where $X$ is the estimation of inserts and $Y$ is the estimation of deletes. Since two summaries are independent, the covariance is 0; Hence, the variance upper bounded by the sum of the $O(\frac{I}{m_{I}}^2 + \frac{D}{m_{D}}^2)$. In Theorem~\ref{dss theorem: frequency estimation}, we proved that $\frac{I}{m_{I}} + \frac{D}{m_{D}} \leq \epsilon F_1$. As a result, the variance of the unbiased estimation for any item's frequency is upper bounded by $\epsilon^2 F_1^2$.

\end{proof}

\claire{\subsection{Integrated SpaceSaving$\pm$}}
\begin{algorithm}[]
\SetAlgoLined
    \KwIn{Bounded deletion data stream $\sigma$, and IntergratedSpaceSaving$\pm$ summary $S$}
    \KwOut{IntergratedSpaceSaving$\pm$ summary $S$}
    \For{(item $e$, operation $op$) from $\sigma$}{
        \uIf{$e \in S$}{
            \uIf{$op$ is an insertion}{
                $insert_{e}$ += 1\;
            }
            \uElse{
                $delete_{e}$ += 1\;
            }
        }
        \uElse{
            \uIf{$S$ not full}{
                $insert_{e}$ = 1\;
                $delete_{e}$ = 0\;
            }
            \uElseIf{$op$ is an insertion}{
                $minItem$ = item $x \in S$ minimizing $insert_{x}$\;
                $w$ = $insert_{minItem}$\;
                evict $minItem$ from summary\;
                add $e$ to summary and set $insert_{e} = w+1$ and $delete_{e}=0$\;
                
            }
        }
    }
  \caption{Integrated SpaceSaving$\pm$ Update Algorithm}
  \label{IntegratedSpaceSaving Update}
\end{algorithm}

Unlike Double SpaceSaving$\pm$ which uses two independent summaries, Integrated SpaceSaving$\pm$ completely integrates deletions and insertions in one data structure as shown in Algorithm~\ref{IntegratedSpaceSaving Update}. The main difference between Integrated SpaceSaving$\pm$ and SpaceSaving$\pm$ is that Integrated SpaceSaving$\pm$ tracks the insert and delete count separately. This ensures that the minimum insert count monotonically increases and never decrease; however, the original SpaceSaving$\pm$ uses one single count to track inserts and deletes together. When insertions and deletions are interleaved, then the minimum count in SpaceSaving$\pm$ may decreases. This can lead to severe underestimation of a frequent item~\cite{zhao2023errata}.

\begin{algorithm}[]
\SetAlgoLined
    \KwIn{Integrated SpaceSaving$\pm$ summary $S$ and an item $e$}
    \KwOut{$e$'s estimated frequency.}
    \uIf{$e \in S$}{
            return $insert_{e} - delete_{e}$\;
    }
    return 0\;
 \caption{Integrated SpaceSaving$\pm$ Query Algorithm}
 \label{IntegratedSpaceSaving Query}
\end{algorithm}

More specifically, the Integrated SpaceSaving$\pm$ data structure maintains a record estimating the number of insertions and deletions for each monitored item. When a new item arrives, if it is already being monitored, then depending on the type of operation, either the insertion or the deletion count of that item is incremented.  If the summary is not full, then no item could have every been evicted from the summary, and since no delete operation arrives before the corresponding item is inserted, then any new item in the stream must be an insertion. Hence, when a new item arrives, and if the summary is not full, the new item is added to the summary and initialized with one insertions and 0 deletions. If a delete arrives and the summary is full, the delete is simply ignored. The interesting case is when an insert arrives and the summary is full.  In this case, Integrated SpaceSaving$\pm$ mimics the behaviour of the original SpaceSaving$\pm$ by evicting the element with the least number of inserts and replacing it with the new element. The insertion count for the new element is overestimated by initializing it to the insertion count for the evicted element. However, the deletion count is underestimated by initializing the deletion count to zero.

To estimate an item's frequency, if the item is monitored, it subtracts the item's delete count from the item's insert count, otherwise the estimated frequency is 0, as shown in Algorithm~\ref{IntegratedSpaceSaving Query}.

We now establish the correctness of Integrated SpaceSaving$\pm$. First, we establish three lemmas about Integrated SpaceSaving$\pm$ to help us prove the correctness of the algorithm.

{
\begin{lemma} \label{lemma: sum of insert counts equal to I}
After processing a bounded deletion data stream $\sigma$, Integrated SpaceSaving$\pm$  following update Algorithm~\ref{IntegratedSpaceSaving Update} ensures that the sum of all insert counts equals $I$.
\end{lemma}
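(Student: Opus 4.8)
The plan is to prove this by induction on the number of stream operations processed, maintaining as the invariant that the sum of all insert counts currently stored in the summary equals the number of insertions seen so far. Write $\Sigma(t) = \sum_{x \in S} insert_x$ for the total of the insert counts after the first $t$ operations have been processed, and let $I^{(t)}$ denote the number of insertions among those first $t$ operations; the goal is $\Sigma(t) = I^{(t)}$ for every $t$, which at $t = N$ yields $\Sigma(N) = I$. The base case $t = 0$ is immediate, since the summary is empty and $\Sigma(0) = 0 = I^{(0)}$.

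Before running the induction I would first record an auxiliary structural fact: \emph{as long as the summary is not full, it contains every item that has been inserted so far}. This holds because an item is removed from the summary only by the eviction step of Algorithm~\ref{IntegratedSpaceSaving Update} (the lines involving $minItem$), which executes only when the summary is full, whereas deletions never remove an item but merely increment a delete count. Hence the number of monitored items is non-decreasing and stays at $m$ once it reaches $m$, so no eviction can have occurred while the summary is still not full, and therefore no previously-inserted item has yet left the summary. The consequence I actually need is that \textbf{the ``not full'' branch can only be entered on an insertion}: if the summary is not full and the arriving item $e$ is unmonitored, then by this invariant $e$ has never been inserted, so by the model assumption that no item is deleted before it is inserted, the current operation on $e$ cannot be a deletion.

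With this in hand, the inductive step is a case analysis on the $t$-th operation showing that an insertion raises $\Sigma$ by exactly $1$ while a deletion leaves it unchanged, so that $\Sigma(t) - \Sigma(t-1) = I^{(t)} - I^{(t-1)}$ in every case. For an insertion of a monitored $e$, $insert_e$ is incremented and $\Sigma$ grows by $1$. For an insertion of an unmonitored $e$ with the summary not full, a fresh record with $insert_e = 1$ is created, again growing $\Sigma$ by $1$. For an insertion of an unmonitored $e$ with the summary full, the eviction removes $w = insert_{minItem}$ and the replacement adds $w + 1$, a net change of $+1$. For a deletion, either $e$ is monitored and only its delete count changes (so $\Sigma$ is unchanged), or $e$ is unmonitored and the summary is full, in which case the operation is ignored; the remaining possibility (unmonitored and not full) is excluded by the auxiliary fact above. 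This closes the induction.

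The only genuinely delicate point is precisely that excluded case in the deletion analysis: one must be certain that a deletion of an unmonitored item never reaches the ``not full'' branch, where it would spuriously create an insert count of $1$ and corrupt the bookkeeping. This is exactly where the non-negativity assumption of the model (an item cannot be deleted unless it was previously inserted) is essential, and supplying it rigorously is the purpose of the structural invariant established in the second paragraph. Everything else is mechanical accounting, so I expect the write-up to be short once that invariant is in place.
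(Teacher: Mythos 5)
Your proof is correct and takes essentially the same approach as the paper: the paper's entire proof is the one-line remark ``This holds by induction over time, as verified by inspection of the algorithm,'' which is precisely the induction you carry out in full. The delicate point you isolate---that the not-full branch of Algorithm~\ref{IntegratedSpaceSaving Update} can only be reached by an insertion, because no item is deleted before being inserted---is the same observation the paper makes informally in its prose description of the algorithm rather than inside the proof, so your write-up simply makes explicit what the paper leaves to inspection.
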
 

\begin{proof}
    This holds by induction over time, as verified by inspection of the algorithm.
\end{proof}
}

\begin{lemma} \label{lemma: upperbound on minCount for spacesaving}
After processing a bounded deletion data stream $\sigma$, Integrated SpaceSaving$\pm$ with $m$ counters following update Algorithm~\ref{IntegratedSpaceSaving Update} ensures that the minimum insert count, $insert_{minItem}$, is less than or equal to $\frac{I}{m}$.
\end{lemma}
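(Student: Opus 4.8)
The plan is to reduce the statement to a simple averaging (pigeonhole) argument, leaning entirely on the conservation law just established in Lemma~\ref{lemma: sum of insert counts equal to I}. That lemma guarantees that at every point in time the insert counts of all monitored items sum to exactly $I$, the total number of insertions processed so far. Since the minimum of finitely many nonnegative numbers is always at most their average, the entire proof comes down to pinning down how many insert counts we are averaging over, i.e. how many counters are occupied.

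First I would treat the case where the summary is full. Then the $m$ insert counts are nonnegative and sum to $I$ by Lemma~\ref{lemma: sum of insert counts equal to I}, so the smallest of them, namely $insert_{minItem}$, satisfies $insert_{minItem} \le I/m$. This averaging step is the crux of the argument and is the direct analogue of the classical minimum-counter bound for SpaceSaving.

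Second, I would dispose of the case where the summary is not full. Here at least one of the $m$ counters is unused; treating an unused counter as carrying insert count $0$, the minimum insert count is $0 \le I/m$ and the bound holds trivially. Equivalently, one may note that $insert_{minItem}$ is only ever consulted by Algorithm~\ref{IntegratedSpaceSaving Update} when the summary is full and an incoming insertion triggers an eviction, so the full case is in fact the only one relevant to the subsequent error analysis.

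I expect no real obstacle beyond this mild case distinction; the only point requiring care is the correct application of Lemma~\ref{lemma: sum of insert counts equal to I}. One must check that it accounts for every contribution to the insert counts, including the $w+1$ assigned to a freshly inserted item right after an eviction (where the evicted item carried $w$), so that the running sum is genuinely $I$ rather than an over- or under-count; granting that, the averaging inequality closes the argument immediately.
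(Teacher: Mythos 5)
Your proof is correct and takes essentially the same route as the paper's: both rest on the conservation law that the insert counts always sum to exactly $I$ (Lemma~\ref{lemma: sum of insert counts equal to I}) combined with the averaging observation that the minimum of $m$ nonnegative counts is at most $\frac{I}{m}$. Your explicit handling of the not-full summary (where the bound holds trivially and $insert_{minItem}$ is never consulted by Algorithm~\ref{IntegratedSpaceSaving Update} anyway) is a minor tightening of a case the paper's proof leaves implicit.
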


\begin{proof}
Since deletions never increment any insert counts, insert counts is only affected by insertions. We observe that the sum of all insert counts in summary is exactly equal to $I$, since the sum of all insert counts increment by 1 after processing one insertion. As a result, $insert_{minItem}$ is maximized when all insert counts are the same. Hence, $insert_{minItem}$ is less than or equal to $\frac{I}{m}$.
\end{proof}

\begin{lemma} \label{lemma: overestimation for IntegratedSpaceSaving}
All monitored items in Integrated SpaceSaving$\pm$ are never underestimated following query Algorithm~\ref{IntegratedSpaceSaving Query}.
\end{lemma}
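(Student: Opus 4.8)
The plan is to control the two stored counts separately and then combine them. For any monitored item $e$, the query returns $\hat f(e) = insert_e - delete_e$ (Algorithm~\ref{IntegratedSpaceSaving Query}), and the goal is to show $\hat f(e) \geq f(e) = I(e) - D(e)$. It therefore suffices to establish two one-sided inequalities: (i) the insert count overestimates the true insertions, $insert_e \geq I(e)$, and (ii) the delete count underestimates the true deletions, $delete_e \leq D(e)$. The desired bound then follows immediately by subtraction.

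For (i), the key observation is that insert counts are touched only by insertions: no deletion ever writes to an insert count, triggers an eviction, or changes which items are monitored. Consequently, the evolution of the pair (set of monitored items, their insert counts) is identical to what one would obtain by running the plain SpaceSaving Update algorithm (Algorithm~\ref{SpaceSaving algorithm}) on the sub-stream consisting of the insertions alone, taken in their original order. I would make this precise by induction over time, showing that at every step the insert counts maintained by Integrated SpaceSaving$\pm$ agree with the SpaceSaving counts on the insertion sub-stream. The standard overestimation property of SpaceSaving — that a monitored item's count is never below its true frequency — then gives $insert_e \geq I(e)$.

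For (ii), I would argue directly from the dynamics of $delete_e$. Each time $e$ is (re)added to the summary its delete count is reset to $0$, and thereafter $delete_e$ is incremented by exactly $1$ for each deletion of $e$ arriving while $e$ is monitored, while deletions of unmonitored items are ignored. Since distinct increments correspond to distinct deletion operations of $e$ in the stream, the number of increments accumulated since the most recent reset is at most the total number of deletions of $e$; hence $delete_e \leq D(e)$. Combining with (i) yields $\hat f(e) = insert_e - delete_e \geq I(e) - D(e) = f(e)$, so no monitored item is underestimated.

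The main obstacle is making the reduction in step (i) airtight, i.e.\ justifying that interleaving deletions leaves the insert-count process unchanged. This requires inspecting each branch of Algorithm~\ref{IntegratedSpaceSaving Update} to confirm two invariants: first, that no deletion branch modifies any insert count or evicts a monitored item; and second, that in the ``summary not full'' branch a newly arriving unmonitored item must be an insertion — which holds because nothing has yet been evicted and, by the no-negative-frequency assumption, a deletion cannot precede the insertion of its item — so that branch never fires on a deletion. Once these invariants are verified, the equivalence with SpaceSaving on the insertion sub-stream, and therefore the overestimation bound $insert_e \geq I(e)$, follows.
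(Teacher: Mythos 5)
Your proposal is correct and follows essentially the same route as the paper's proof: reduce the insert counts to plain SpaceSaving run on the insertion sub-stream (giving $insert_e \geq I(e)$), observe that $delete_e$ only counts deletions of $e$ occurring while it is monitored since its last (re)insertion into the summary (giving $delete_e \leq D(e)$), and subtract. Your added care about why the ``summary not full'' branch can never fire on a deletion matches the justification the paper gives in its prose description of the algorithm, so there is nothing substantive to change.
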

\begin{proof}
{The $insert_e$ counts of the summary are dealt with exactly like in the SpaceSaving algorithm, for the substream of $\sigma$ consisting of the insertion operations only. By the analysis of SpaceSaving, it is known that for an item $e$ in the summary, $insert_e$ is an overestimate of the true number of insertions of $e$ since the beginning of the stream. 

The $delete_e$ count of element $e$ in the summary is equal to the number of deletions of item $e$ since the last time that it was inserted into the summary, so it is an underestimate of the  number of deletions of $e$ since the beginning of the stream.

Thus, for any $e$ in summary, $insert_e-delete_e$ is an overestimate.}
\end{proof}

\begin{lemma}
    \label{lemma iss-above}
    In Integrated SpaceSaving$\pm$, \michel{any} item inserted more than $insert_{minItem}$ must be monitored in the summary; {moreover, for any item $e$ in the summary, its count $insert_e$ exceeds its true number of insertions by at most $insert_{minItem}$.}
\end{lemma}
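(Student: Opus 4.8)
The plan is to exploit the observation (already used in the proof of Lemma~\ref{lemma: overestimation for IntegratedSpaceSaving}) that the $insert_e$ counters of Integrated SpaceSaving$\pm$ evolve exactly like the counters of the original SpaceSaving algorithm run on the substream of $\sigma$ consisting of the insertion operations only, since deletions never touch them. The single structural fact I would establish first is the \emph{monotonicity of the minimum insert count}: an insertion can only increase a counter, and an eviction removes the item currently holding the minimum and reintroduces the new item with value $insert_{minItem}+1$, which is strictly larger than the old minimum while every surviving counter is at least the old minimum. Hence, once the summary is full, the value $insert_{minItem}$ is non-decreasing, so its value at any past eviction is at most its final value.

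For the second clause I would track the \emph{overestimation error} $insert_e - I(e)$ of a monitored item $e$ and argue that it is frozen after $e$ is last added to the summary. When $e$ is (re)added, its counter is set to $m_{\mathrm{add}}+1$, where $m_{\mathrm{add}}$ denotes the value of $insert_{minItem}$ at that moment, while $e$ has been inserted at least once by then; so the error is at most $m_{\mathrm{add}}$. Every subsequent insertion of $e$ increments both $insert_e$ and $I(e)$ by one and leaves the error unchanged, and deletions do not affect $insert_e$ at all. Therefore the final error satisfies $insert_e - I(e) \leq m_{\mathrm{add}} \leq insert_{minItem}$ by monotonicity.

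For the first clause I would argue the contrapositive: if $x$ is not monitored at the end, then $I(x) \leq insert_{minItem}$. Since every insertion of an unmonitored item causes that item to be added to the summary, an item that has been inserted at least once but is currently absent must have been evicted; consider its last eviction. At that instant its counter equalled the then-current minimum, and by Lemma~\ref{lemma: overestimation for IntegratedSpaceSaving} this counter overestimated $x$'s insertions so far, which equals $I(x)$ because $x$ is never reinserted afterwards. Combining these gives $I(x) \leq insert_{minItem}^{\mathrm{evict}} \leq insert_{minItem}$, again by monotonicity, which is exactly the contrapositive of the claim.

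The main obstacle I anticipate is establishing the monotonicity of the minimum insert count cleanly, and spelling out that it is precisely the separation of the insert and delete counts that makes Integrated SpaceSaving$\pm$ reduce to plain SpaceSaving on the insertion substream, the very property that the interleaved original SpaceSaving$\pm$ lacked. Once monotonicity and the overestimate lemma are in hand, both clauses become bookkeeping about the last time $e$ (respectively $x$) entered the summary, so I do not expect further difficulty.
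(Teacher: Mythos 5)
Your proof is correct, and it rests on the same key observation as the paper's: since deletions never modify the $insert_e$ counters, those counters evolve exactly as in plain SpaceSaving run on the insertion-only substream of $\sigma$. The difference is what happens after that observation. The paper's proof consists of the reduction alone and then cites the two claimed properties as well-known facts about SpaceSaving; you instead re-derive them from first principles, via the monotonicity of $insert_{minItem}$ once the summary is full, the freezing of the overestimation error $insert_e - I(e)$ after the last time $e$ enters the summary, and a last-eviction contrapositive argument for unmonitored items. Your route is therefore self-contained (no appeal to the SpaceSaving literature), and it isolates monotonicity of the minimum insert count as the structural property doing all the work --- exactly the property the paper invokes only informally when explaining why Integrated SpaceSaving$\pm$ avoids the failure mode of the original SpaceSaving$\pm$ under interleaving; the cost is length, where the paper's citation-based proof is one sentence. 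Two small points to tighten, neither affecting correctness. First, when $e$ is last added to a summary that is not yet full, its counter is set to $1$, so the error is $0$; your wording covers only additions by eviction. Second, the fact you attribute to Lemma~\ref{lemma: overestimation for IntegratedSpaceSaving} --- that at eviction time $insert_x$ is at least the number of insertions of $x$ seen so far --- is not quite that lemma's statement, which bounds the frequency estimate $insert_x - delete_x$ against $f(x)$; the fact you need is the SpaceSaving overestimation property for insert counters, which follows directly from your own reduction observation, so cite that instead.
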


\begin{proof}
{The $insert_e$ counts of the summary are dealt with exactly like in the SpaceSaving algorithm, for the substream of $\sigma$ consisting of the insertion operations only, and the properties are well-known to hold for SpaceSaving.}
\end{proof}

\begin{lemma}
    \label{lemma, IntegratedSpaceSaving Error Upper Bound}
    In Integrated SpaceSaving$\pm$ following update Algorithm~\ref{IntegratedSpaceSaving Update} and query Algorithm~\ref{IntegratedSpaceSaving Query}, the estimation error for any item is upper bounded by $insert_{minItem}$. \michel{ We can then write:} $\forall x \in U, |f(x) - \hat{f}(x)| \leq insert_{minItem}$.
\end{lemma}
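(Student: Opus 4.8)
The plan is to split into two cases according to whether the queried item $x$ is monitored in the summary at query time, and in each case to bound the error on one side only.

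First I would dispatch the easy case where $x$ is \emph{not} monitored. Then $\hat f(x)=0$, so the error equals $f(x)$ exactly. Because no frequency is ever negative, $0 \le f(x)=I(x)-D(x)\le I(x)$, and by the contrapositive of Lemma~\ref{lemma iss-above} an unmonitored item must satisfy $I(x)\le insert_{minItem}$. Hence $|f(x)-\hat f(x)|=f(x)\le insert_{minItem}$, as required.

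The substantial case is $x$ monitored. By Lemma~\ref{lemma: overestimation for IntegratedSpaceSaving} we already know $\hat f(x)\ge f(x)$, so it suffices to bound the overestimate $\hat f(x)-f(x)$ from above. Let $t_0$ be the most recent time at which $x$ entered the summary (either into a free slot, in which case I set $w=0$, or by eviction, in which case $w$ is the value of $insert_{minItem}$ at $t_0$), and split the insertions and deletions of $x$ into those occurring before $t_0$, written $I_{<},D_{<}$, and those at or after $t_0$, written $I_{\ge},D_{\ge}$. Since $x$ remains monitored throughout $[t_0,\text{end}]$, inspection of Algorithm~\ref{IntegratedSpaceSaving Update} gives $insert_x=w+I_{\ge}$ and $delete_x=D_{\ge}$, so that the quantity $\hat f(x)-f(x)=(insert_x-delete_x)-(I(x)-D(x))$ telescopes to $w-(I_{<}-D_{<})$.

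The main obstacle, and the reason a naive triangle inequality that bounds the insert overcount and the delete undercount separately will fail, is that the delete undercount $D(x)-delete_x=D_{<}$ can be large. The resolution is precisely the telescoping above: the potentially large pre-$t_0$ deletions are cancelled by the pre-$t_0$ insertions, and the surviving term $I_{<}-D_{<}$ is exactly the true frequency of $x$ immediately before $t_0$, which is nonnegative because an item cannot be deleted below zero. Thus $\hat f(x)-f(x)=w-(I_{<}-D_{<})\le w$. Finally, since the insert counts evolve exactly as in SpaceSaving on the insertion-only substream, the minimum insert count is non-decreasing over time, so $w\le insert_{minItem}$ at query time. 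Combining the two cases yields $\forall x\in U,\ |f(x)-\hat f(x)|\le insert_{minItem}$.
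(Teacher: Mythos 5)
Your proof is correct and takes essentially the same route as the paper's: the same split into unmonitored/monitored cases, the same decomposition around the last time $t_0$ the item entered the summary, and the same telescoping cancellation showing the error equals $w-(I_{<}-D_{<})$ and is therefore sandwiched between $0$ and $w\leq insert_{minItem}$. The only cosmetic differences are that you invoke Lemma~\ref{lemma: overestimation for IntegratedSpaceSaving} for the no-underestimation direction (the paper re-derives it inside the same decomposition via Lemma~\ref{lemma iss-above}), and that you make explicit the monotonicity of the minimum insert count when comparing $w$ at time $t_0$ with $insert_{minItem}$ at query time, a step the paper leaves implicit.
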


\begin{proof}
{First consider the case when $e$ is unmonitored, so the estimate is $0$. Its true frequency is at most its number of insertions, and by Lemma~\ref{lemma iss-above} that is bounded by $insert_{minItem}$, hence the Lemma holds for $e$.

Now, consider the case when $e$ is in the summary, and consider the last time $t$ at which $e$ was inserted in the Summary. Let $I_{\geq t}(e)$ and $D_{\geq t}(e)$ denote the number of insertions and of deletions of $e$ since time $t$, and let $min$ denote the value of $insert_{minItem}$ at time $t$. Since the algorithm updates $insert_e$ and $delete_e$ correctly while $e$ is in the summary, the query algorithm outputs the estimate $\hat f(e)=insert_e-delete_e=min+I_{\geq t}(e)-D_{\geq t}(e)$. On the other hand, the frequency $f(e)$ of $e$ equals the net number of occurences of $e$ prior to time $t$, plus $I_{\geq t}(e)-D_{\geq t}(e)$. By Lemma~\ref{lemma iss-above} the net number of occurences of $e$ prior to time $t$ is at most $min$, so $\hat f(e)-min\leq f(e)\leq \hat f(e)$, implying the Lemma. }
\end{proof}

\begin{theorem} \label{theorem: frequency estimation}
After processing a bounded deletion data stream $\sigma$, IntegratedSpaceSaving$^{\pm}$ with $O(\frac{\alpha}{\epsilon})$ space following update Algorithm~\ref{IntegratedSpaceSaving Update} and query Algorithm~\ref{IntegratedSpaceSaving Query} provide an estimate $\hat{f}$ such that $\forall i, |f(i) - \hat{f}(i)| < \epsilon F_1$.
\end{theorem}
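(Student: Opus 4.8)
The plan is to assemble the result directly from the three quantitative lemmas already established about Integrated SpaceSaving$\pm$, together with the defining inequality of the bounded deletion model; no new combinatorial argument about the summary should be needed, since all the structural work is done in Lemmas~\ref{lemma: upperbound on minCount for spacesaving} and~\ref{lemma, IntegratedSpaceSaving Error Upper Bound}.

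First I would invoke Lemma~\ref{lemma, IntegratedSpaceSaving Error Upper Bound}, which already gives the uniform per-item bound $|f(x)-\hat f(x)|\leq insert_{minItem}$ for every $x\in U$. This reduces the theorem to controlling the single quantity $insert_{minItem}$. Next I would apply Lemma~\ref{lemma: upperbound on minCount for spacesaving}, which bounds that quantity by $insert_{minItem}\leq I/m$, where $m$ is the number of counters. Chaining these two facts yields $|f(x)-\hat f(x)|\leq I/m$ for all $x$.

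The remaining step is to relate $I$ to $F_1$ using the bounded deletion assumption. From the model constraint $D\leq(1-1/\alpha)I$ and $F_1=I-D$, I would derive $F_1\geq I-(1-1/\alpha)I = I/\alpha$, i.e.\ $I\leq \alpha F_1$. Substituting gives $|f(x)-\hat f(x)|\leq \alpha F_1/m$. Choosing $m=\Theta(\alpha/\epsilon)$, which is exactly the $O(\alpha/\epsilon)$ space claimed, then drives the right-hand side down to $\epsilon F_1$. To obtain the \emph{strict} inequality in the statement rather than a non-strict one, I would simply build a constant factor of slack into the counter count, e.g.\ set $m=2\alpha/\epsilon$ so that the bound becomes $\tfrac{1}{2}\epsilon F_1<\epsilon F_1$; this keeps the space at $O(\alpha/\epsilon)$ and matches the constants used in the Double SpaceSaving$\pm$ argument of Theorem~\ref{dss theorem: frequency estimation}.

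I do not anticipate a genuine obstacle here: the only subtlety is bookkeeping the constant in $O(\alpha/\epsilon)$ so that the final inequality is strict, and confirming that the $I\leq\alpha F_1$ bound is applied to $I$ (the total insertions, which is what $insert_{minItem}$ is governed by) rather than to $F_1$ directly. The conceptual heavy lifting—showing that the insert counts behave like an insertion-only SpaceSaving instance and that the deletion undercount is absorbed into the same $insert_{minItem}$ error budget—has already been discharged in Lemmas~\ref{lemma iss-above} and~\ref{lemma, IntegratedSpaceSaving Error Upper Bound}, so this proof is essentially a one-line substitution once those are in place.
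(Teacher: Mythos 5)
Your proposal is correct and follows essentially the same route as the paper's own proof: chain Lemma~\ref{lemma, IntegratedSpaceSaving Error Upper Bound} with Lemma~\ref{lemma: upperbound on minCount for spacesaving} to get $|f(x)-\hat f(x)|\leq insert_{minItem}\leq I/m$, then use $I\leq\alpha F_1$ from the bounded deletion definition and choose $m=\Theta(\alpha/\epsilon)$. Your only departure is building in a factor-of-two slack ($m=2\alpha/\epsilon$) to honestly obtain the strict inequality, which is a small improvement in rigor over the paper's choice of $m=\alpha/\epsilon$, where the chain as written only yields $\leq\epsilon F_1$.
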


\begin{proof}
By Lemma~\ref{lemma: upperbound on minCount for spacesaving} and Lemma~\ref{lemma, IntegratedSpaceSaving Error Upper Bound}, we known that $\forall x \in U, |f(x) - \hat{f}(x)| \ \leq insert_{minItem} \leq \frac{I}{m}$. We also know that $I \leq \alpha F_1$. Let's set $m = \frac{\alpha}{\epsilon}$, then we have $\frac{I}{m} \leq \epsilon F_1$. As a result, IntegratedSpaceSaving$^{\pm}$ with $O(\frac{\alpha}{\epsilon})$ space ensures $\forall i, |f(i) - \hat{f}(i)| < \epsilon F_1$.
\end{proof}

Integrated SpaceSaving$\pm$ also solves the heavy hitters problem. Since Integrated SpaceSaving$\pm$ never underestimates monitored items (Lemma~\ref{lemma: overestimation for IntegratedSpaceSaving}), then if we report all the items with frequency estimations greater than or equal to $\epsilon F_1$, then all heavy hitters will be identified as shown in Theorem~\ref{theorem: Frequent}.
 
\begin{theorem}
\label{theorem: Frequent}
IntegratedSpaceSaving$^{\pm}$ solves the heavy hitters problem in the bounded deletion model using $O(\frac{\alpha}{\epsilon})$ space.
\end{theorem}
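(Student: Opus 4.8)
The plan is to mirror the argument used for Double SpaceSaving$\pm$ in Theorem~\ref{dss theorem: Frequent}, combining the frequency-estimation guarantee of Theorem~\ref{theorem: frequency estimation} with the one-sided (never-underestimate) guarantee of Lemma~\ref{lemma: overestimation for IntegratedSpaceSaving}. I would fix the summary size to $m=\frac{\alpha}{\epsilon}$ so that Theorem~\ref{theorem: frequency estimation} applies, and adopt the reporting rule suggested above: output every monitored item $x$ whose estimate $\hat f(x)=insert_{x}-delete_{x}$ is at least $\epsilon F_1$. Note that $F_1=I-D$ can be maintained exactly with two auxiliary counters, so the threshold $\epsilon F_1$ is known at query time and the rule is well defined.

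First I would show that every heavy hitter is monitored. Suppose, for contradiction, that some item $x$ with $f(x)\geq \epsilon F_1$ is not in the summary. Then the query algorithm returns $\hat f(x)=0$, so $|f(x)-\hat f(x)|=f(x)\geq \epsilon F_1$, contradicting the strict bound $|f(x)-\hat f(x)|<\epsilon F_1$ of Theorem~\ref{theorem: frequency estimation}. Hence every heavy hitter is monitored.

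Next I would show that every heavy hitter actually passes the threshold and is therefore reported. By Lemma~\ref{lemma: overestimation for IntegratedSpaceSaving}, monitored items are never underestimated, so for a monitored heavy hitter $x$ we have $\hat f(x)\geq f(x)\geq \epsilon F_1$. Thus $x$ clears the reporting threshold and is output. Combining the two steps, the output set contains every heavy hitter, and since the output is a subset of the monitored items its cardinality is at most $m=O(\frac{\alpha}{\epsilon})$, giving the claimed space bound.

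The step I expect to require the most care is the bookkeeping at the boundary $\epsilon F_1$: the containment argument relies on the error bound of Theorem~\ref{theorem: frequency estimation} being strict, so that an item with $f(x)=\epsilon F_1$ exactly cannot be unmonitored, and it relies on the never-underestimate property of Lemma~\ref{lemma: overestimation for IntegratedSpaceSaving} to guarantee that such a borderline heavy hitter is not filtered out by the threshold test. Everything else is a direct consequence of the two previously established results.
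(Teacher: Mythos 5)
Your proof is correct and follows the same basic strategy as the paper's: report the monitored items whose estimate is at least $\epsilon F_1$, and argue by contradiction that every heavy hitter lands in this set. The difference is that you make the argument explicitly two-step --- (1) every heavy hitter is monitored, by the strict error bound of Theorem~\ref{theorem: frequency estimation}, and (2) every monitored heavy hitter clears the threshold, by the no-underestimation guarantee of Lemma~\ref{lemma: overestimation for IntegratedSpaceSaving} --- whereas the paper compresses everything into a single contradiction that invokes only Lemma~\ref{lemma: overestimation for IntegratedSpaceSaving}. Your version is actually the more complete one: the paper's contradiction (``$\hat{f}(x) < \epsilon F_1 \leq f(x)$ contradicts never-underestimation'') only bites when $x$ is monitored, since the lemma says nothing about unmonitored items; an unmonitored heavy hitter would have $\hat{f}(x)=0$ and be underestimated without contradicting that lemma at all. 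Your step (1), which rules out the unmonitored case via the strict inequality of Theorem~\ref{theorem: frequency estimation}, is exactly the piece the paper leaves implicit (it mirrors the structure of the paper's proof of Theorem~\ref{dss theorem: Frequent} for Double SpaceSaving$\pm$). Your side remarks --- that $F_1=I-D$ can be maintained exactly with two auxiliary counters so the threshold is known at query time, and that strictness of the error bound is what handles the boundary case $f(x)=\epsilon F_1$ --- are also correct and worth keeping.
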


\begin{proof} Assume a heavy hitter $x$ is not contained if the set of all items with estimated frequency larger than or equal to $\epsilon F_1$. Recall, by definition of heavy hitters, $f(x) \geq \epsilon F_1$. Since $x$ is not contained, $x$'s frequency estimation, $\hat{f}(x)$, must be less than $\epsilon F_1$. However, by Lemma~\ref{lemma: overestimation for IntegratedSpaceSaving}, Integrated SpaceSaving$\pm$ never underestimates the frequency of monitored items.
  
Hence, by contradiction Integrated SpaceSaving$\pm$ solves the deterministic heavy hitters problem.
\end{proof}

\subsection{Comparisons between Double SpaceSaving$\pm$ and Integrated SpaceSaving$\pm$}
We assume each field uses the same number of bits (32-bit in the implementation).
Following the proofs of Theorem~\ref{theorem: frequency estimation} and that each entry in the summary has three fields $(i, insert_i, delete_i)$, IntegratedSpaceSaving$\pm$ uses $3 \frac{\alpha}{\epsilon}$. On the other hand, following the proofs of Theorem~\ref{dss theorem: frequency estimation} and that each entry in the summary uses two fields, Double SpaceSaving$\pm$ uses $2\frac{2\alpha}{\epsilon} + 2\frac{2(\alpha-1)}{\epsilon} = \frac{8\alpha-2}{\epsilon}$. Integrated SpaceSaving$\pm$ requires less memory footprint. Double SpaceSaving$\pm$ also has its own advantage and it can be extended supporting unbiased estimations (Unbiased Double SpaceSaving$\pm$).


\section{Tighter Analysis on Error Bounds}
In this section, we present tighter analysis of the proposed algorithms. In Section~\ref{sec-alg}, we showcased that the proposed algorithms achieve the desired error bound $\epsilon F_1$, in which the error bound depends on all items in data stream $\sigma$. In this section, we first prove that both algorithms guarantee the residual error bound and then prove that they both also ensure the relative error bound under relatively mild assumptions. We take inspirations from~\cite{berinde2010space, metwally2005efficient, zhao2021spacesaving, RC23}.

\subsection{Residual Error Bound}
We define $F_{1, \alpha}^{res(k)} = F_1 - \frac{1}{\alpha}\sum_{i=1}^{k} f_{i}$. Observe that $F_{1, \alpha}^{res(k)}$ is equivalent to the residual error bound shown in Lemma~\ref{spacesaving residual error bound}, $F_{1}^{res(k)}$, for the insertion-only model, as $\alpha$ is set to 1. The residual error bound ($F_{1, \alpha}^{res(k)}$) is much tighter than the original error bound ($F_1$), since the residual error depends less on the heavy hitters.

First, we derive the residual error bound for Double SpaceSaving$\pm$. 
\begin{theorem}
    \label{dss+- residual error}
    Double SpaceSaving$\pm$ with $m_{I}+m_{D} = O(\alpha k / \epsilon)$ space achieves the residual error bound guarantee in which $\forall x \in U, |f(x) - \hat{f}(x)| \leq \frac{\epsilon}{k} F_{1, \alpha}^{res(k)}$, where $k < min(m_{I}, m_{D})$.
\end{theorem}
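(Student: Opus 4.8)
The plan is to reduce the analysis to two independent applications of the SpaceSaving residual bound (Lemma~\ref{spacesaving residual error bound}), one for each of the summaries $S_{insert}$ and $S_{delete}$, and then to stitch the two error terms together into a single bound in terms of $F_{1,\alpha}^{res(k)}$. Write $\hat I(x)$ and $\hat D(x)$ for the SpaceSaving estimates of $I(x)$ and $D(x)$ produced by the two summaries, so that the query returns $\hat f(x)=\max(\hat I(x)-\hat D(x),0)$. Since $f(x)=I(x)-D(x)\ge 0$, clipping a negative estimate to $0$ only moves it closer to $f(x)$; hence $|f(x)-\hat f(x)|\le |(I(x)-D(x))-(\hat I(x)-\hat D(x))|\le |I(x)-\hat I(x)|+|D(x)-\hat D(x)|$ by the triangle inequality, and it suffices to control the insertion and deletion errors separately.

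Next I would apply the residual guarantee to each summary. Viewing $S_{insert}$ as a SpaceSaving summary run on the substream of insertions (total weight $I$, with sorted insertion counts $I_1\ge I_2\ge\cdots$) and $S_{delete}$ as one run on the deletions (total weight $D$, sorted counts $D_1\ge\cdots$), the sharp form underlying Lemma~\ref{spacesaving residual error bound} — namely that the minimum counter is at most $(\text{total}-\text{top-}k)/(m-k)$ — yields $|I(x)-\hat I(x)|\le I^{res(k)}/(m_I-k)$ and $|D(x)-\hat D(x)|\le D^{res(k)}/(m_D-k)$, where $I^{res(k)}=I-\sum_{i=1}^k I_i$ and $D^{res(k)}=D-\sum_{i=1}^k D_i$. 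The remaining task, and the crux of the proof, is to bound both $I^{res(k)}$ and $D^{res(k)}$ by a common multiple of $F_{1,\alpha}^{res(k)}=F_1-\tfrac1\alpha\sum_{i=1}^k f_i$.

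The key lemma I would establish is the comparison $\sum_{i=1}^k I_i\ge \sum_{i=1}^k f_i$ between the top-$k$ insertion counts and the top-$k$ frequencies. This holds because each item satisfies $I(x)\ge f(x)$, so the $k$ items realizing $f_1,\dots,f_k$ already contribute insertion counts summing to at least $\sum_{i=1}^k f_i$, and the $k$ largest insertion counts can only be larger. Combined with $I\le\alpha F_1$ (from the bounded-deletion constraint), this gives $I^{res(k)}\le \alpha F_1-\sum_{i=1}^k f_i=\alpha F_{1,\alpha}^{res(k)}$. For the deletion side I would use the cruder chain $D^{res(k)}\le D\le(\alpha-1)F_1$ together with $F_{1,\alpha}^{res(k)}\ge\tfrac{\alpha-1}{\alpha}F_1$ (valid since $\sum_{i=1}^k f_i\le F_1$), which again yields $D^{res(k)}\le\alpha F_{1,\alpha}^{res(k)}$; the case $\alpha=1$ is trivial as then $D=0$.

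Finally I would plug these in and choose the counter budgets to balance the two terms. Substituting gives $|f(x)-\hat f(x)|\le \alpha F_{1,\alpha}^{res(k)}\bigl(\tfrac{1}{m_I-k}+\tfrac{1}{m_D-k}\bigr)$, so taking $m_I=m_D=k+\tfrac{2\alpha k}{\epsilon}=O(\alpha k/\epsilon)$ makes each summand at most $\tfrac{\epsilon}{2k}F_{1,\alpha}^{res(k)}$, for a total of $\tfrac{\epsilon}{k}F_{1,\alpha}^{res(k)}$, with $m_I+m_D=O(\alpha k/\epsilon)$ and $k<\min(m_I,m_D)$ as required. I expect the main obstacle to be the top-$k$ comparison $\sum_{i=1}^k I_i\ge\sum_{i=1}^k f_i$: the two sorted orders (by frequency versus by insertion count) in general correspond to different items, so the inequality must be argued at the level of item sets rather than index by index, and this is precisely where the pointwise domination $I(x)\ge f(x)$ is used.
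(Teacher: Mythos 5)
Your proof is correct and follows the same overall strategy as the paper's: split the error via the triangle inequality into the two summaries, apply the sharp residual form underlying Lemma~\ref{spacesaving residual error bound} to each, invoke the set-level comparison $\sum_{i=1}^{k} I_i \geq \sum_{i=1}^{k} f_i$ (which the paper justifies exactly as you do, via the pointwise domination $I(x)\geq f(x)$), use $I\leq\alpha F_1$ and $D\leq(\alpha-1)F_1$, and pick $m_I,m_D=O(\alpha k/\epsilon)$. Where you genuinely differ, and improve matters, is the deletion term: the paper bounds it by $(\alpha-1)F_1/(m_D-k)$ and, after an asymmetric choice of $m_D$, lands on $\frac{\epsilon}{k}\bigl(F_1-\frac{1}{2\alpha}\sum_{i=1}^{k}f_i\bigr)$, which is strictly weaker than the claimed bound and is reconciled with $\frac{\epsilon}{k}F_{1,\alpha}^{res(k)}$ only by an unexplained ``$\approx$''. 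Your additional observation that $F_{1,\alpha}^{res(k)}\geq\frac{\alpha-1}{\alpha}F_1$ (valid because $\sum_{i=1}^{k}f_i\leq F_1$) lets you bound the deletion residual by $\alpha F_{1,\alpha}^{res(k)}$ as well, so your symmetric choice $m_I=m_D=k+2\alpha k/\epsilon$ yields exactly the stated bound with no slack. You also explicitly handle the clipping $\hat f(x)=\max(\hat I(x)-\hat D(x),0)$ in the query algorithm, noting it can only decrease the error since $f(x)\geq 0$; the paper's proof silently skips this point. In short: same decomposition and same key lemma, but your bookkeeping is tighter and closes a small gap in the published argument.
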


\begin{proof}
    First, let's define the maximum error $\delta$ as $\delta = max(\forall x \in U, |f(x) - \hat{f}(x)|)$ for Double SpaceSaving$\pm$. 
    We know that $\delta \leq \frac{I-\sum_{i=1}^{k}I_{i}}{m_{I}-k} + \frac{D-\sum_{i=1}^{k}D_{i}}{m_{D}-k}$ by Lemma~\ref{spacesaving residual error bound} and triangle inequality. By definition of bounded deletion model, $I \leq \alpha F_1$ and $D \leq (\alpha-1) F_1$. 
    Hence, we have $\delta \leq  \alpha \frac{F_1-1/\alpha \sum_{i=1}^{k}I_{i}}{m_{I}-k} + (\alpha-1)\frac{F_1-1/(\alpha-1) \sum_{i=1}^{k}D_{i}}{m_{D}-k} $. We know that $\sum_{i=1}^{k}I_{i} \geq \sum_{i=1}^{k}f_{i}$ (this is because sum of k largest insert count has to be larger than or equal to the sum of k largest frequencies) and $\sum_{i=1}^{k}D_{i} \geq 0 $.
    As a result, $\delta <  \alpha \frac{F_1-1/\alpha \sum_{i=1}^{k}f_{i}}{m_{I}-k} + (\alpha-1) \frac{F_1 }{m_{D}-k}$. Let $m_{I} = k (\frac{2\alpha}{\epsilon}+1)$ and $m_{D} = k (\frac{2(\alpha-1)}{\epsilon}+1)$. Then, $\delta < \frac{\epsilon}{k} (F_1 - 1/(2\alpha)\sum_{i=1}^{k}f_{i}) \approx \frac{\epsilon}{k} F_{1, \alpha}^{res(k)}$.
\end{proof}

We present the residual error bound for Integrated SpaceSaving$\pm$. Before presenting the proof, we first construct a helpful Lemma.

\begin{lemma}
    \label{lemma, iss-insert-mincount}
    The minimum insert count ($insert_{minItem}$), in Integrated SpaceSaving$\pm$ with m counters, is less than or equal to $\alpha \frac{F_1 - 1/\alpha \sum_{i=1}^{k}f_i}{m-k}$ where $k<m$.
\end{lemma}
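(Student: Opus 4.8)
The plan is to transport the residual analysis of plain SpaceSaving (Lemma~\ref{spacesaving residual error bound}) onto the $insert$ counters of Integrated SpaceSaving$\pm$, and then convert the resulting bound from $I$ to $F_1$ via the bounded-deletion inequality $I \le \alpha F_1$. The justification for reusing the SpaceSaving machinery is that the $insert_e$ counters are updated exactly as in SpaceSaving run on the insertion substream of $\sigma$; I would assume the summary is full, since otherwise no eviction ever occurred and every insert count is exact. First I would invoke Lemma~\ref{lemma: sum of insert counts equal to I} to get that the $m$ insert counts sum to $I$. Listing them in non-increasing order and letting $\sigma_k$ be the sum of the $k$ largest, the minimum is at most the average of the remaining $m-k$ counts, so
$$insert_{minItem} \le \frac{I - \sigma_k}{m-k}.$$

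The crux is the lower bound $\sigma_k \ge \sum_{i=1}^{k} f_i$, and this is the step I expect to be the main obstacle because it must rule out the bad case where one of the highest-frequency items has been evicted. I would handle it with a charging argument built on Lemma~\ref{lemma iss-above}: any item inserted more than $insert_{minItem}$ times is monitored, so every \emph{unmonitored} item $x$ satisfies $I(x) \le insert_{minItem}$, which is at most every summary counter. Take the $k$ items of largest true frequency and split them into the monitored ones $A$ and the unmonitored ones $B$. Since $\sigma_k$ is the maximum weight of any $k$ summary counters and $k<m$, I may choose a witness set consisting of $A$ together with $|B|$ further monitored ``filler'' items. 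Each item of $A$ contributes a counter that overestimates its true insertions (the overestimate part of Lemma~\ref{lemma iss-above}), and each filler counter is at least $insert_{minItem} \ge I(x)$ for a matched $x \in B$. Summing and using $I(x) \ge f(x)$ termwise gives
$$\sigma_k \ge \sum_{x \in A} I(x) + \sum_{x \in B} I(x) \ge \sum_{i=1}^{k} f_i.$$

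Finally I would substitute this bound together with $I \le \alpha F_1$ into the displayed inequality for $insert_{minItem}$, obtaining
$$insert_{minItem} \le \frac{\alpha F_1 - \sum_{i=1}^{k} f_i}{m-k} = \alpha\,\frac{F_1 - \tfrac{1}{\alpha}\sum_{i=1}^{k} f_i}{m-k},$$
which is exactly the claimed bound. I expect the monotonicity facts (insert counts only grow, the minimum never decreases) from Lemma~\ref{lemma: upperbound on minCount for spacesaving} and its proof to be the only auxiliary ingredients needed, so the remaining work is just the bookkeeping of the charging argument rather than any new idea.
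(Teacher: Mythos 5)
Your proof is correct and follows essentially the same route as the paper's: the sum of all insert counts equals $I$ (Lemma~\ref{lemma: sum of insert counts equal to I}), the $k$ largest summary counters together dominate $\sum_{i=1}^{k} f_i$, the minimum counter is at most the average of the remaining $m-k$ counters, and $I \le \alpha F_1$ converts the bound into the stated form. The only difference is that your charging argument (splitting the top-$k$ frequency items into monitored and unmonitored ones, and covering the unmonitored ones by filler counters via Lemma~\ref{lemma iss-above}) rigorously justifies the key inequality $\sigma_k \ge \sum_{i=1}^{k} f_i$, which the paper asserts in a single sentence by appealing only to the no-underestimation property of monitored items --- a step that, as written in the paper, silently ignores the possibility that a top-$k$ item is not in the summary.
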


\begin{proof}
    The sum of all insert count equals to $I$, as the sum of all insert count always increase by 1 after processing an insertion. If we zoom into the k largest insert count in the summary, the sum of their insert counts is no less than $\sum_{i=1}^{k}f_i$, since no monitored items are underestimated as shown in Lemma~\ref{lemma: overestimation for IntegratedSpaceSaving}. 
    As a result, $insert_{minItem} \leq \frac{I - \sum_{i=1}^{k}f_i}{m-k}$. We know that by definition of bounded deletion, $I\leq \alpha F_1$. Hence, $insert_{minItem} \leq \alpha \frac{F_1 - 1/\alpha \sum_{i=1}^{k}f_i}{m-k}$.
\end{proof}

\begin{theorem}
    \label{ss+- residual error}
    Integrated SpaceSaving$\pm$ with $O(\alpha k / \epsilon)$ space achieves the residual error bound guarantee in which $\forall x \in U, |f(x) - \hat{f}(x)| \leq \frac{\epsilon}{k} F_{1, \alpha}^{res(k)}$, where $k<m$.
\end{theorem}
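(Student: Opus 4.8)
The plan is to obtain this theorem as an almost immediate consequence of the two lemmas already established for Integrated SpaceSaving$\pm$, exactly paralleling the structure of the proof of Theorem~\ref{theorem: frequency estimation} (which chained Lemma~\ref{lemma: upperbound on minCount for spacesaving} with Lemma~\ref{lemma, IntegratedSpaceSaving Error Upper Bound}); here the only change is that the cruder per-item bound $insert_{minItem} \leq I/m$ is replaced by the sharper residual bound from Lemma~\ref{lemma, iss-insert-mincount}. So there is no genuinely new combinatorial content to prove; the work is to thread the inequalities together and to fix the constant in the space so that the target coefficient $\epsilon/k$ comes out exactly.

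First I would invoke Lemma~\ref{lemma, IntegratedSpaceSaving Error Upper Bound}, which states that for every $x \in U$ the estimation error satisfies $|f(x) - \hat f(x)| \leq insert_{minItem}$. Next I would substitute the residual bound on the minimum insert count from Lemma~\ref{lemma, iss-insert-mincount}, namely $insert_{minItem} \leq \alpha \frac{F_1 - (1/\alpha)\sum_{i=1}^{k} f_i}{m-k}$, which is valid for any $k < m$. Recognizing that the numerator is precisely $\alpha \, F_{1,\alpha}^{res(k)}$ by the definition $F_{1,\alpha}^{res(k)} = F_1 - \frac{1}{\alpha}\sum_{i=1}^{k} f_i$, this collapses to $insert_{minItem} \leq \alpha \frac{F_{1,\alpha}^{res(k)}}{m-k}$.

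It then remains to choose $m$. Requiring $\alpha/(m-k) \leq \epsilon/k$ is equivalent to $m \geq k\bigl(1 + \alpha/\epsilon\bigr)$, so I would set $m = k\bigl(1 + \alpha/\epsilon\bigr)$, which makes $m-k = \alpha k/\epsilon$ and yields $|f(x)-\hat f(x)| \leq \alpha \frac{F_{1,\alpha}^{res(k)}}{\alpha k/\epsilon} = \frac{\epsilon}{k} F_{1,\alpha}^{res(k)}$ for every $x$. This value of $m$ is $O(\alpha k/\epsilon)$, matching the claimed space, and it automatically satisfies the hypothesis $k < m$, so no separate feasibility check is needed.

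I do not anticipate a real obstacle, since both supporting lemmas do the heavy lifting: Lemma~\ref{lemma, IntegratedSpaceSaving Error Upper Bound} reduces the worst-case error to a single quantity, and Lemma~\ref{lemma, iss-insert-mincount} (itself relying on Lemma~\ref{lemma: overestimation for IntegratedSpaceSaving} to guarantee $\sum_{i=1}^{k} insert_i \geq \sum_{i=1}^{k} f_i$ for the $k$ largest insert counts, and on the fact that all insert counts sum to $I \leq \alpha F_1$) already converts it into residual form. The only point warranting a sentence of care is that the bound of Lemma~\ref{lemma, iss-insert-mincount} uses the $k$ largest \emph{insert} counts while the residual $F_{1,\alpha}^{res(k)}$ is defined via the $k$ largest \emph{frequencies} $f_i$; this mismatch is exactly what the no-underestimation property (Lemma~\ref{lemma: overestimation for IntegratedSpaceSaving}) resolves, and I would simply cite it so the substitution of $\sum_{i=1}^{k} f_i$ in the numerator is rigorous. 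The constant $1$ absorbed into $m = k(1+\alpha/\epsilon)$ is what guarantees the final coefficient is \emph{exactly} $\epsilon/k$ rather than an approximation.
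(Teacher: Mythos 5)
Your proposal is correct and follows essentially the same route as the paper's own proof: chain Lemma~\ref{lemma, IntegratedSpaceSaving Error Upper Bound} with Lemma~\ref{lemma, iss-insert-mincount} and set $m = k(\frac{\alpha}{\epsilon}+1)$, which is exactly the paper's choice. Your extra remark about the insert-count versus true-frequency mismatch being resolved by Lemma~\ref{lemma: overestimation for IntegratedSpaceSaving} is a point the paper handles inside the proof of Lemma~\ref{lemma, iss-insert-mincount}, so nothing is missing.
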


\begin{proof}
    From Lemma~\ref{lemma, IntegratedSpaceSaving Error Upper Bound} and Lemma~\ref{lemma, iss-insert-mincount}, we know the estimation error for all items is upper bounded by $insert_{minItem}$ and $insert_{minItem} \leq \alpha \frac{F_1 - 1/\alpha \sum_{i=1}^{k}f_i}{ m-k}$. Let $m=k(\frac{\alpha}{\epsilon}+1)$. We achieve the desired residual error bound: $\forall x \in U, |f(x) - \hat{f}(x)| \leq \frac{\epsilon}{k} F_{1, \alpha}^{res(k)}$.
\end{proof}

\subsection{Relative Error Bound}
The \textbf{$\epsilon$ relative error guarantee} is defined such that $\forall x \in U, |f(x)-\hat{f}(x)| \leq \epsilon f(x)$. The relative error has been shown to be much harder to achieve, but with practical importance~\cite{gupta2003counting, cormode2021relative}. In a recent work, Mathieu and de Rougemont~\cite{RC23} showcased that the original SpaceSaving algorithm achieves relative error guarantees under mild assumptions. As a result, we demonstrate that the proposed algorithms in the bounded deletion model also achieve relative error guarantees under similar assumptions.

\begin{lemma}
    Let f be $\gamma$ decreasing with $1<\gamma<2$. 
    Then $\sum_{j=i}^{n} f_{j} \leq (i-1)f_{i-1}\frac{2(\gamma-1)}{\gamma-1} $ 
    and hence $F_1 = \sum_{j=1}^{n} f_j \leq f_1 (1+\frac{2(\gamma-1)}{2-\gamma}) = f_1\frac{\gamma}{2-\gamma}$.
\end{lemma}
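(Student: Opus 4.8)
The plan is to decompose the tail sum $\sum_{j=i}^{n} f_j$ into geometric blocks whose boundaries are successive powers of $\gamma$, exploiting the fact that the value of $f$ roughly halves each time the index is multiplied by $\gamma$. Writing $a = i-1$, I would group the indices $\{i, i+1, \ldots, n\}$ into blocks $B_\ell = \{ j : \gamma^\ell a < j \leq \gamma^{\ell+1} a \}$ for $\ell = 0, 1, 2, \ldots$, so that $B_0$ starts exactly at index $i$ and the blocks together cover the whole tail.

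First I would iterate the $\gamma$-decreasing property of Definition~\ref{gamma-decreasing}: applying $f_{\lceil \gamma t \rceil} \leq f_t/2$ repeatedly starting from $t = a$ gives $f_{\lceil \gamma^\ell a \rceil} \leq f_a / 2^\ell$. Since $f$ is non-increasing (the frequencies are sorted in non-increasing order), every index $j \in B_\ell$ then satisfies $f_j \leq f_a / 2^\ell$. Second, the number of indices in $B_\ell$ is at most $\gamma^{\ell+1} a - \gamma^\ell a = (\gamma - 1)\gamma^\ell a$. Multiplying these two bounds, block $B_\ell$ contributes at most $(\gamma-1) a f_a (\gamma/2)^\ell$ to the sum.

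Summing over $\ell \geq 0$ and using $\gamma < 2$ (so that $\gamma/2 < 1$ and the geometric series converges), $\sum_{\ell \geq 0} (\gamma/2)^\ell = \frac{2}{2-\gamma}$, which yields $\sum_{j=i}^n f_j \leq (i-1) f_{i-1} \frac{2(\gamma-1)}{2-\gamma}$, i.e.\ the first inequality with constant $\frac{2(\gamma-1)}{2-\gamma}$. The second inequality then follows by specializing to $i = 2$: the tail bound gives $\sum_{j=2}^n f_j \leq f_1 \frac{2(\gamma-1)}{2-\gamma}$, and adding the $f_1$ term and simplifying $1 + \frac{2(\gamma-1)}{2-\gamma} = \frac{\gamma}{2-\gamma}$ gives $F_1 \leq f_1 \frac{\gamma}{2-\gamma}$.

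The main obstacle is the bookkeeping around the block boundaries: the ceilings and floors hidden in $\lceil \gamma^\ell a \rceil$ must be handled so that consecutive blocks neither overlap nor leave gaps, and the final block may be truncated at $n$ (which only helps, since it drops terms). These technicalities are routine and do not affect the geometric-series constant, so the conceptual core is entirely the block decomposition together with the convergence condition $\gamma < 2$.
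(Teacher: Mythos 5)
You cannot be compared against the paper's own argument here, because the paper omits the proof entirely and defers to Lemma~5 of~\cite{RC23}; so the only question is whether your argument stands on its own, and it has a genuine gap at the iteration step. You claim that iterating $f_{\lceil \gamma t \rceil} \leq f_t/2$ from $t=a$ yields $f_{\lceil \gamma^\ell a \rceil} \leq f_a/2^\ell$. Under Definition~\ref{gamma-decreasing} the property can only be applied at \emph{integer} indices, so iteration produces bounds at the indices $t_0=a$, $t_{\ell+1}=\lceil \gamma t_\ell \rceil$, and the ceilings compound: $t_\ell$ can strictly exceed $\lceil \gamma^\ell a \rceil$. Since $f$ is non-increasing, a bound at the larger index $t_\ell$ implies nothing about $f_{\lceil \gamma^\ell a\rceil}$ --- the inequality runs the wrong way --- so indices of $B_\ell$ lying in $[\lceil \gamma^\ell a\rceil, t_\ell)$ escape your bound. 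Concretely, for $\gamma=3/2$, $a=1$, the sequence $f = (8,4,2,2,1,1,\tfrac12,\dots)$ is $\gamma$-decreasing (check $t=1,2,3,4,\dots$; no constraint ever touches $f_4$), yet $4=\lceil \gamma^3\rceil \in B_3$ and $f_4 = 2 > f_1/2^3$, contradicting your step that every $j \in B_\ell$ has $f_j \leq f_a/2^\ell$.

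This is not "routine bookkeeping that does not affect the constant"; it is the entire difficulty. Carrying the iterates $t_\ell$ through correctly (using $t_\ell \leq \gamma^\ell a + \frac{\gamma^\ell-1}{\gamma-1}$) yields only $\sum_{j\geq i} f_j \leq \frac{2(\gamma-1)}{2-\gamma}\left((i-1) + \frac{1}{\gamma-1}\right) f_{i-1}$, and the extra additive term $\frac{2}{2-\gamma}f_{i-1}$ cannot be removed: under the definition as stated in the paper, the lemma itself is false. Indeed, $f_j = f_1 2^{-(j-1)}$ is $\gamma$-decreasing for \emph{every} $\gamma \in (1,2)$, because $\lceil \gamma t\rceil \geq t+1$ for any integer $t$, so $f_{\lceil \gamma t\rceil} \leq f_{t+1} = f_t/2$; yet $\sum_{j\geq 2} f_j \to f_1$ as $n$ grows, which exceeds $\frac{2(\gamma-1)}{2-\gamma}f_1$ whenever $\gamma < 4/3$, and $F_1 \to 2f_1 > \frac{\gamma}{2-\gamma}f_1$. (You did read the misprinted constant $\frac{2(\gamma-1)}{\gamma-1}$ correctly as $\frac{2(\gamma-1)}{2-\gamma}$; that is not the issue.) Your block decomposition can be salvaged only by strengthening the hypothesis so that halving holds at all real scales, e.g.\ $f_{\lceil \gamma t\rceil} \leq f_{\lceil t \rceil}/2$ for all real $t\geq 1$ with $\gamma t \leq n$; then $f_{\lceil \gamma^\ell a\rceil} \leq f_a/2^\ell$ is valid, and the block counts should be summed by telescoping, $\sum_{\ell\geq 0}\left(\lfloor \gamma^{\ell+1}a\rfloor - \lfloor \gamma^\ell a\rfloor\right)2^{-\ell} = \sum_{\ell\geq 1}\lfloor \gamma^\ell a\rfloor 2^{-\ell} - a \leq \frac{2(\gamma-1)}{2-\gamma}\,a$, rather than by your per-block estimate $(\gamma-1)\gamma^\ell a$, which overlooks that an interval of length $L$ may contain $\lfloor L \rfloor + 1$ integers and would leak yet another additive $2f_{i-1}$.
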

\begin{proof}
    We omit the proof. (This is proven in Lemma 5 at~\cite{RC23})
\end{proof}

\begin{definition}[Zipfian Distribution~\cite{zipf2016human}]
Let a function f with domain $\{1,2...n\}$ follow the Zipf distribution with parameter $\beta$, then $f_i = F_1 \frac{1}{i^{\beta} \xi(\beta)}$ where $F_1 = \sum_{i=1}^{n} f_i$ and $\xi(\beta) = \sum_{i=1}^{n} i^{-\beta}$.
\end{definition}
By the definition of Zipf distribution, we know that $\xi(\beta)$ converges to a small constant when $\beta > 1$ and $f_1 = F_1/\xi(\beta)$.

\begin{theorem}
    \label{dss+- relative error}
    Assuming insertions and deletions in the $\alpha$-bounded deletion model follow the $\gamma$-decreasing property ($1<\gamma<2$) and the exact frequencies follow the Zipf distribution with parameter $\beta$, then Double SpaceSaving$\pm$ with $ O(\frac{\alpha k^{\beta+1}}{\epsilon 2^{log_{\gamma}k}})$ space achieves the relative error bound guarantee such that 
    $|f_i - \hat{f}_i| \leq \epsilon f_i$, for $i\leq k$.
\end{theorem}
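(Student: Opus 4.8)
The plan is to bound the (uniform) additive error of Double SpaceSaving$\pm$ and then certify that, for the item of frequency rank $i\le k$, this error is at most $\epsilon f_i$. As in the proof of Theorem~\ref{dss+- residual error}, the query error for every item is bounded via the triangle inequality by the sum of the residual errors of the two summaries (the clamping to $0$ in Algorithm~\ref{double-spaceSaving-query} can only help, since $f(x)\ge 0$). Writing $\delta=\max_{x}|f(x)-\hat f(x)|$ and applying the SpaceSaving residual bound (Lemma~\ref{spacesaving residual error bound}, in the general form $|f-\hat f|\le F_1^{res(\ell)}/(m-\ell)$) to $S_{insert}$ and $S_{delete}$ with residual parameter $\ell=k$, I get
\[
\delta \le \frac{I^{res(k)}}{m_I-k}+\frac{D^{res(k)}}{m_D-k},
\]
where $I^{res(k)}=\sum_{j>k}I_j$ and $D^{res(k)}=\sum_{j>k}D_j$. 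Since this bound is uniform over all items and $f_k\le f_i$ for every $i\le k$, it suffices to force $\delta\le\epsilon f_k$ to obtain $|f_i-\hat f_i|\le\epsilon f_i$ for all $i\le k$.

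Next I would control the two residual tails using the $\gamma$-decreasing hypothesis on the insertion and deletion vectors. The tail lemma gives $I^{res(k)}\le k\,I_k\cdot\frac{2(\gamma-1)}{2-\gamma}$, and iterating the defining inequality of $\gamma$-decreasing from rank $1$ yields $I_k\le I_1/2^{\log_\gamma k}$; combined with the bounded-deletion inequalities $I_1\le I\le\alpha F_1$ this gives $I^{res(k)}\le\frac{2(\gamma-1)}{2-\gamma}\cdot\frac{\alpha k F_1}{2^{\log_\gamma k}}$. The identical argument applied to the deletion vector, together with $D\le(\alpha-1)F_1$, yields $D^{res(k)}\le\frac{2(\gamma-1)}{2-\gamma}\cdot\frac{(\alpha-1)k F_1}{2^{\log_\gamma k}}$. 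Thus both residual tails decay like $kF_1/2^{\log_\gamma k}$, up to the $\gamma$-constant.

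Finally I would invoke the Zipf hypothesis to translate the target into an absolute threshold: $f_k=F_1/(k^\beta\xi(\beta))$, so $\delta\le\epsilon f_k$ is equivalent to $\delta\le\epsilon F_1/(k^\beta\xi(\beta))$. Splitting this budget evenly between the two summaries and substituting the residual bounds above, each constraint reduces to requiring $m_I-k=\Omega\!\left(\frac{\alpha k^{\beta+1}}{\epsilon\,2^{\log_\gamma k}}\right)$ and $m_D-k=\Omega\!\left(\frac{(\alpha-1)k^{\beta+1}}{\epsilon\,2^{\log_\gamma k}}\right)$, where the factors $\frac{2(\gamma-1)}{2-\gamma}$ and $\xi(\beta)$ are absorbed as constants. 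It therefore suffices to take $m_I$ and $m_D$ of these orders, and summing gives total space $m_I+m_D=O\!\left(\frac{\alpha k^{\beta+1}}{\epsilon\,2^{\log_\gamma k}}\right)$, as claimed (with $k<\min(m_I,m_D)$ holding automatically for this choice).

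The step I expect to be most delicate is the interaction between the two different structural hypotheses: the $2^{\log_\gamma k}$ savings must be extracted from the $\gamma$-decreasing property of the insertion and deletion vectors, whereas the $k^\beta$ blow-up comes from the Zipf law on the frequency vector, and these refer to different orderings of the universe. The argument above sidesteps the resulting rank mismatch by keeping the per-summary error bound uniform (so it is expressed through the insertion/deletion residuals, to which $\gamma$-decreasing applies directly) and only invoking the Zipf law at the very end to size the single threshold $\epsilon f_k$. I would double-check that $f_k\le f_i$ is genuinely the binding case and that the estimate $I_k\le I_1/2^{\log_\gamma k}$, which is exact only when $k$ is an integer power of $\gamma$, is handled with ceilings without affecting the asymptotics.
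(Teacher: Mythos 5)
Your proposal is correct and follows essentially the same route as the paper's proof: the same per-summary residual decomposition $\delta \le I^{res(k)}/(m_I-k)+D^{res(k)}/(m_D-k)$, the same $\gamma$-decreasing tail bound $\sum_{j>k}I_j \le k I_k \tfrac{2(\gamma-1)}{2-\gamma}$ together with $I_k \le I_1/2^{\log_\gamma k}$ and $I_1\le\alpha F_1$ (and the analogous chain for deletions), and the same Zipf substitution $F_1=\xi(\beta)k^{\beta}f_k$ to reduce everything to the threshold $\epsilon f_k$; the paper merely applies Zipf earlier (writing $I_k \le \alpha\xi(\beta)f_k k^{\beta}/2^{\log_\gamma k}$) and sets $m_I=m_D$ with a combined $(2\alpha-1)/\epsilon$ factor instead of your proportional split, which is an immaterial difference. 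Your closing caveats (the rank mismatch between the $I$-, $D$-, and $f$-orderings, the ceiling in $I_k\le I_1/2^{\log_\gamma k}$, and absorbing $\xi(\beta)$ as a constant, which requires $\beta>1$) are handled implicitly, not explicitly, in the paper's proof.
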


\begin{proof}
    Recall, the maximum estimation error $\delta$ is upper bounded by $\frac{I - \sum_{i=1}^{k}I_i}{m_{I}-k} + \frac{D - \sum_{i=1}^{k}D_i}{m_{D}-k}$. Since insertions and deletions follow $\gamma$-decreasing property, then $\sum_{i=k+1}^{n}I_i \leq k I_k \frac{2(\gamma-1)}{2-\gamma}$ and $\sum_{i=k+1}^{n}D_i \leq k D_k \frac{2(\gamma-1)}{2-\gamma}$. Hence, $\delta \leq 
    \frac{k I_k \frac{2(\gamma-1)}{2-\gamma}}{m_{I}-k} +  \frac{k D_k \frac{2(\gamma-1)}{2-\gamma}}{m_{D}-k}$.

    For insertions, we know i) $I_1 \leq I \leq \alpha F_1 = \alpha \xi (\beta) f_1$, ii) $f_i = \frac{f_1}{i^{\beta}}$, and iii) $I_{i} \leq I_1 / 2^{log_{\gamma}(i)}$; For deletions, we know i) $D_1 \leq D \leq (\alpha-1) F_1 = (\alpha-1) \xi (\beta) f_1$, ii) $f_i = \frac{f_1}{i^{\beta}}$, and iii) $D_{i} \leq D_1 / 2^{log_{\gamma}(i)}$. 
    
    Hence, $I_k \leq \frac{I_1}{2^{log_{\gamma}(k)}} \leq \alpha \xi(\beta) f_1 / 2^{log_{\gamma}(k)} = \alpha \xi(\beta) f_k k^{\beta}  / 2^{log_{\gamma}(k)}$, and 
    $D_k \leq D_1 / 2^{log_{\gamma}(k)} \leq (\alpha-1) \xi(\beta) f_k k^{\beta}  / 2^{log_{\gamma}(k)}$.
    Let $m_{I} = m_{D}= k + \frac{2(\gamma-1)}{2-\gamma}\frac{k^{\beta+1}}{ 2^{log_{\gamma}k}} \frac{(2\alpha-1)}{\epsilon}$, then $\delta \leq \epsilon f_k$ and hence for all $i < k$, $|f_i - \hat{f}_i| \leq \epsilon f_i$.

\end{proof}

\begin{theorem}
    \label{ss+- relative error}
    Assuming insertions in the $\alpha$-bounded deletion model follows the $\gamma$-decreasing property ($1<\gamma<2$) and the exact frequencies follow the Zipf distribution with parameter $\beta$, then Integrated SpaceSaving$\pm$ with $O(\frac{\alpha k^{\beta+1}}{\epsilon 2^{log_{\gamma}k}})$ space achieves the relative error bound guarantee such that 
    $|f_i - \hat{f}_i| \leq \epsilon f_i$, for $i\leq k$.
\end{theorem}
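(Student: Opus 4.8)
The plan is to mirror the relative-error argument for Double SpaceSaving$\pm$ (Theorem~\ref{dss+- relative error}), but to exploit the fact that for the integrated structure the entire estimation error is controlled by the single quantity $insert_{minItem}$, which depends only on the insertion substream. This is precisely why the hypothesis is milder here: we need only the insertions (and not the deletions) to be $\gamma$-decreasing, matching the statement of the theorem.

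First I would start from Lemma~\ref{lemma, IntegratedSpaceSaving Error Upper Bound}, which gives $\forall x\in U,\ |f(x)-\hat f(x)|\le insert_{minItem}$, so it suffices to bound $insert_{minItem}$ by $\epsilon f_k$. Since the insert counts of the summary evolve exactly as in plain SpaceSaving run on the insertion-only substream, the residual minimum-count argument applies verbatim to the insertions: the $k$ largest insert counts sum to at least $\sum_{i=1}^{k}I_i$ (monitored items are never underestimated, Lemma~\ref{lemma: overestimation for IntegratedSpaceSaving}), while the sum of all insert counts is exactly $I$ (Lemma~\ref{lemma: sum of insert counts equal to I}). Averaging over the remaining $m-k$ counters yields
\[
insert_{minItem}\le \frac{I-\sum_{i=1}^{k}I_i}{m-k}=\frac{\sum_{i>k}I_i}{m-k}.
\]
This is the insertion analogue of Lemma~\ref{lemma, iss-insert-mincount}, but retaining the $I_i$'s rather than relaxing them to $f_i$'s, which is what makes the relative guarantee reachable.

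Next I would invoke the $\gamma$-decreasing hypothesis on the insertions. By the $\gamma$-decreasing tail bound (Definition~\ref{gamma-decreasing} and the associated summation lemma), $\sum_{i>k}I_i\le k I_k\frac{2(\gamma-1)}{2-\gamma}$, and iterating the halving rule gives $I_k\le I_1/2^{log_{\gamma}k}$. Combining the bounded-deletion inequality $I_1\le I\le\alpha F_1=\alpha\xi(\beta)f_1$ with the Zipf identity $f_1=f_k k^{\beta}$ yields $I_k\le \alpha\xi(\beta)f_k k^{\beta}/2^{log_{\gamma}k}$, exactly as in the proof of Theorem~\ref{dss+- relative error}. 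Substituting these two estimates into the displayed bound produces
\[
insert_{minItem}\le \frac{1}{m-k}\,\frac{2(\gamma-1)}{2-\gamma}\,\frac{\alpha\xi(\beta)k^{\beta+1}}{2^{log_{\gamma}k}}\,f_k.
\]

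Finally I would choose $m=k+\frac{2(\gamma-1)}{2-\gamma}\frac{\alpha\xi(\beta)k^{\beta+1}}{\epsilon\,2^{log_{\gamma}k}}$, so that the right-hand side is at most $\epsilon f_k$; since $\xi(\beta)$ is a bounded constant for $\beta>1$, this is $m=O(\alpha k^{\beta+1}/(\epsilon\,2^{log_{\gamma}k}))$ as claimed. For every $i\le k$ we then obtain $|f_i-\hat f_i|\le insert_{minItem}\le \epsilon f_k\le \epsilon f_i$, using $f_i\ge f_k$, which is the desired relative bound. I expect no serious obstacle beyond the one conceptual point worth stating carefully: the error of Integrated SpaceSaving$\pm$ is driven purely by the insertion counts, so the deletion distribution is irrelevant to the bound and only the insertions need be $\gamma$-decreasing; the remaining inequalities form the same chain already used for the double summary.
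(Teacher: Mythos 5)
Your proposal is correct and follows essentially the same route as the paper's own proof: bound the error by $insert_{minItem}$ (Lemma~\ref{lemma, IntegratedSpaceSaving Error Upper Bound}), bound $insert_{minItem}$ by $\bigl(I-\sum_{i=1}^{k}I_i\bigr)/(m-k)$, apply the $\gamma$-decreasing tail bound and halving rule to the insertion counts, convert $I_1$ to $f_k$ via the bounded-deletion and Zipf identities, and choose $m$ accordingly. Your version is, if anything, slightly more careful than the paper's in two minor respects: you justify the residual bound on $insert_{minItem}$ (which the paper asserts without proof) and you retain the constant $\xi(\beta)$ in the choice of $m$, which the paper drops.
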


\begin{proof}
    We know that estimation error at most $insert_{minItem}$ and $insert_{minItem} \leq \frac{I - \sum_{i=1}^{k}I_i}{m-k}$. Since insertions follow $\gamma$-decreasing property, then $\sum_{i=k+1}^{n}I_i \leq k I_k \frac{2(\gamma-1)}{2-\gamma}$. Hence, $insert_{minItem} \leq 
    \frac{\sum_{i=k+1}^{n}I_i}{m-k} 
    \leq \frac{k I_k \frac{2(\gamma-1)}{2-\gamma}}{m-k}$. 
    Since we know i) $I_1 \leq I \leq \alpha F_1 = \alpha \xi (\beta) f_1$, 
    ii) $f_i = \frac{f_1}{i^{\beta}}$, and iii) $I_{g} \leq I_1 / 2^{log_{\gamma}(g)}$. 
    Hence,  $I_k \leq I_1 / 2^{log_{\gamma}(k)} \leq \alpha \xi(\beta) f_1 / 2^{log_{\gamma}(k)} = \alpha \xi(\beta) f_k k^{\beta}  / 2^{log_{\gamma}(k)}$.
    Let $m= k + \frac{2(\gamma-1)}{2-\gamma}\frac{k^{\beta+1}}{ 2^{log_{\gamma}k}} \frac{\alpha}{\epsilon}$, then $insert_{minItem} \leq \epsilon f_k$ and hence for all $i < k$, $|f_i - \hat{f}_i| \leq \epsilon f_i$.
\end{proof}

\section{Mergeability}
Mergeability is a desirable property in distributed settings in which two summaries on two data sets are given, after merging between these two summaries, the merged summary has error and size guarantees equivalent to a single summary which processed the union of two data sets. 
\begin{definition}[Mergeable~\cite{agarwal2013mergeable}]
    A summary $S$ is mergeable if there exists a merge algorithm $A$ that produces a summary $S(\sigma_1 \cup \sigma_2, \epsilon)$ from two input summaries $S(\sigma_1, \epsilon)$ and $S(\sigma_2, \epsilon)$. Note, the size of all three summaries should be the same in achieving $\epsilon$ error guarantee.
\end{definition}

Both the original SpaceSaving and the Unbiased SpaceSaving have been shown to be mergable~\cite{agarwal2013mergeable, ting2018data}.
Since Double SpaceSaving$\pm$ relies on two independent summaries, Double SpaceSaving$\pm$ is also mergeable. 

For IntergratedSpaceSaving$\pm$, to define a Merge algorithm, we take inspiration from~\cite{agarwal2013mergeable}. The merge algorithm is shown in Algorithm~\ref{IntegratedSpaceSaving Merge}. Given two IntergratedSpaceSaving$\pm$ summaries of $m$ counters, $S_1$ and $S_2$,  we first union the two summaries (i,e, if both summaries monitor item $x$, then $x$'s insert and delete counts in the merge summary are the sum of $x$'s insert and delete counts from $S_1$ and $S_2$). After the union, the merged summary contain at most $2m$ counters. The merged summary consists of the $m$ largest items among these 2$m$ counters based on the insert counts. 

\begin{algorithm}[]
\SetAlgoLined
    \KwIn{Two Integrated SpaceSaving$\pm$ summaries of size $m$: $S_1$ and $S_2$}
    \KwOut{Merged summary $S_{merge}$ with size $m$}
    $S_{3} = S_1$ UNION $S_2$\;
    select $m$ largest items from $S_3$ based on insert counts\;
    add these $m$ items into $S_{merge}$ and keep their insert/delete counts\;
   

  \caption{Integrated SpaceSaving$\pm$ Merge Algorithm}
  \label{IntegratedSpaceSaving Merge}
\end{algorithm}




\begin{figure*}[h]
\centering
        \begin{subfigure}{.24\textwidth}
        \centering
        \includegraphics[scale = 0.6]{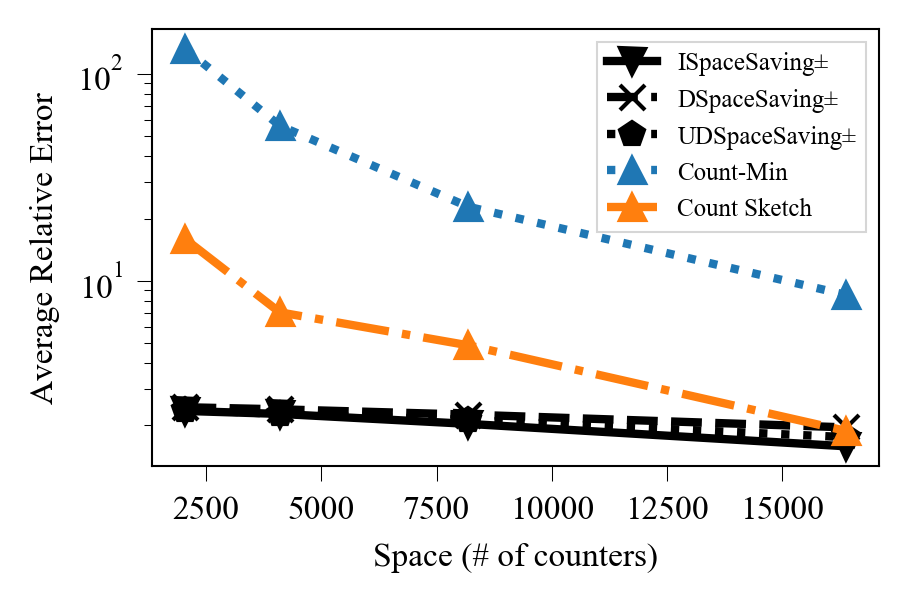
        }%
        \caption{Zipf $\beta=1.0$, Frequency Estimation}
        \end{subfigure}
        \begin{subfigure}{.24\textwidth}
        \centering
        \includegraphics[scale = 0.6]{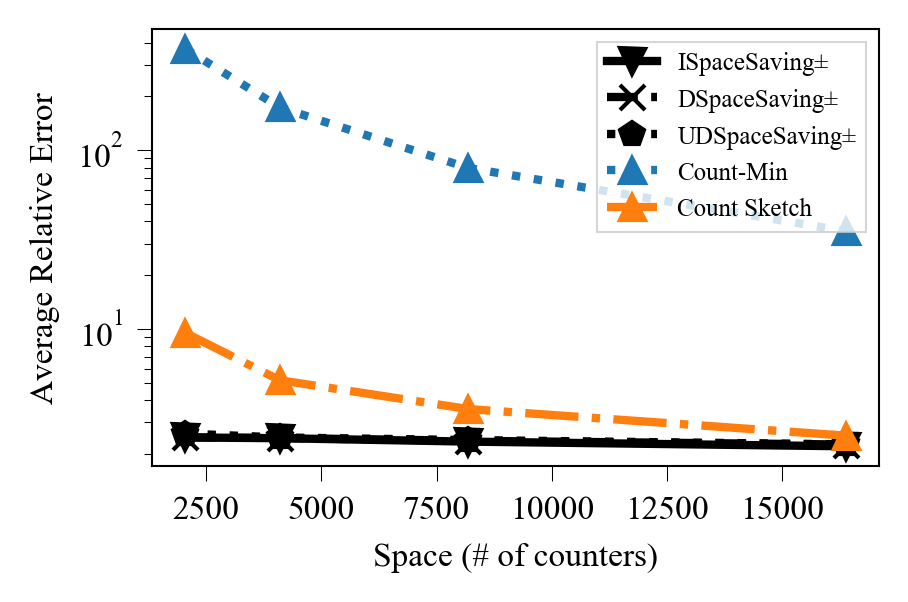}
        \caption{YCSB, Frequency Estimation}
        \end{subfigure}
        \begin{subfigure}{.24\textwidth}
        \centering
        \includegraphics[scale = 0.6]{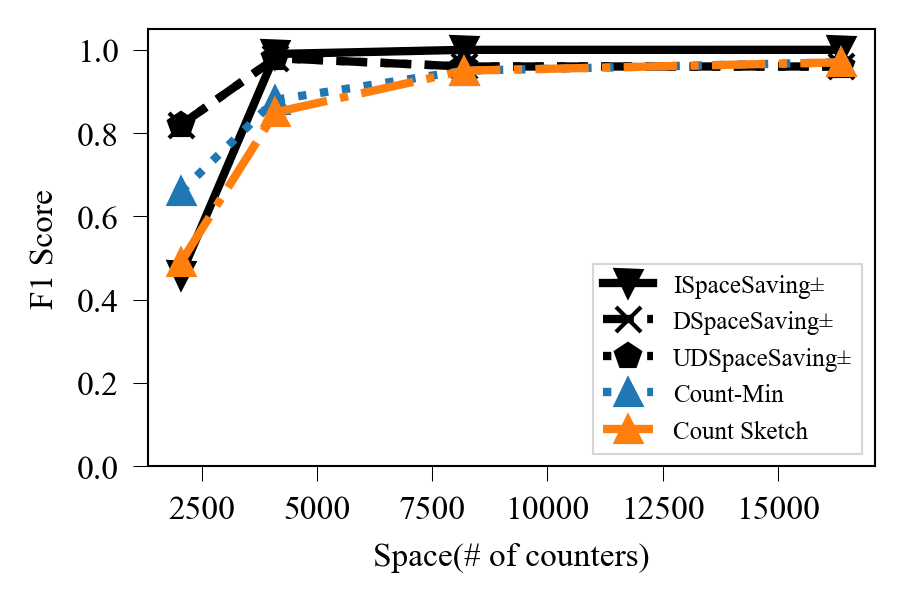}
        \caption{Zipf $\beta=1.0$, Top 100}
        \end{subfigure}
         \begin{subfigure}{.24\textwidth}
        \centering
        \includegraphics[scale = 0.6]{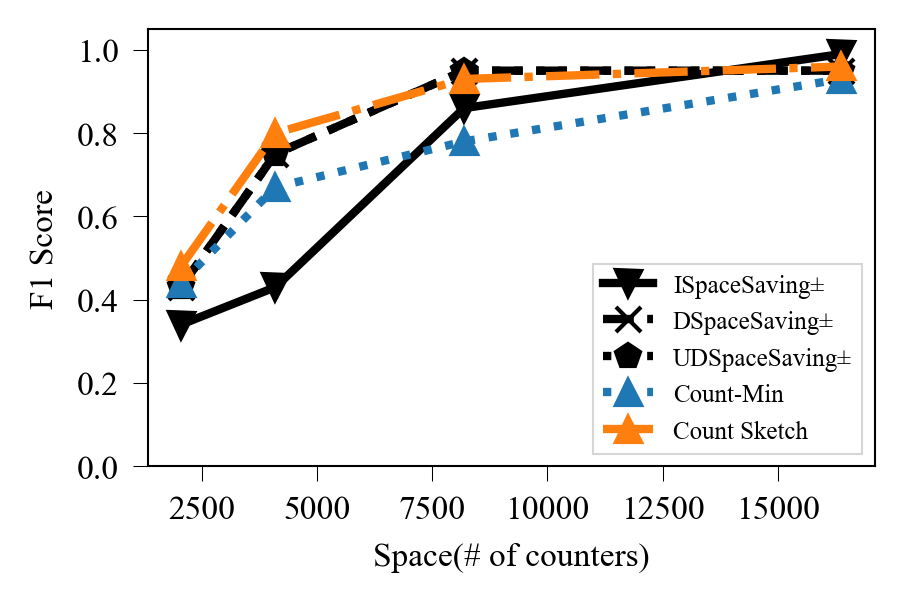}
        \caption{YCSB, Top 100}
        \end{subfigure}

	\caption{Comparison between our proposed algorithms with linear sketches over synthetic and real world dataset.}
\label{fig:exp}
\end{figure*}

\begin{theorem}
    The $IntegratedSpaceSaving\pm$ summaries are mergeable for $\alpha$ bounded deletion data stream following the Algorithm~\ref{IntegratedSpaceSaving Merge} with $m=\frac{\alpha}{\epsilon}$ counters.
\end{theorem}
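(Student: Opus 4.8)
The plan is to reduce the correctness of the merge to two facts already in hand: the mergeability of plain SpaceSaving on the insert counts, and the single-summary error analysis of Integrated SpaceSaving$\pm$ (Lemmas~\ref{lemma: upperbound on minCount for spacesaving}--\ref{lemma, IntegratedSpaceSaving Error Upper Bound}). Write $\sigma=\sigma_1\sqcup\sigma_2$ with total insertions $I=I_1+I_2$, and observe that the $insert$-count component of an Integrated SpaceSaving$\pm$ summary evolves exactly like a SpaceSaving summary on the insertion-only substream, while the merge (Algorithm~\ref{IntegratedSpaceSaving Merge}) acts on the insert counts precisely as the SpaceSaving merge (union, then keep the $m$ largest by insert count) and merely carries the summed $delete$ counts along. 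So I would first invoke the mergeability of SpaceSaving~\cite{agarwal2013mergeable} to conclude that the merged insert counts form a valid SpaceSaving summary of the combined insertion stream: letting $c:=insert_{minItem}$ in $S_{merge}$, every monitored item overestimates its true number of insertions by at most $c$, every item absent from $S_{merge}$ has at most $c$ true insertions, and $c\le I/m$.

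Next I would control the delete counts. For a monitored item $e$, its merged delete count is the sum of the two summaries' delete counts for $e$, each of which is at most the true number of deletions of $e$ in the corresponding substream; hence $delete_e\le D(e)$. Combined with $insert_e\ge I(e)$, this gives $\hat f(e)=insert_e-delete_e\ge I(e)-D(e)=f(e)$, so the merged summary still never underestimates a monitored item (the analog of Lemma~\ref{lemma: overestimation for IntegratedSpaceSaving}).

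For the two-sided bound I would split on whether $e$ survives the merge. If $e$ survives, its merged counts are the union (sum) of the per-summary counts, so $\hat f(e)=\hat f_1(e)+\hat f_2(e)$ and $f(e)=f_1(e)+f_2(e)$, where $\hat f_j$ denotes the estimate of the constituent summary $S_j$; applying Lemma~\ref{lemma, IntegratedSpaceSaving Error Upper Bound} to each $S_j$ together with the triangle inequality yields $|f(e)-\hat f(e)|\le insert_{minItem}^{(1)}+insert_{minItem}^{(2)}\le I_1/m+I_2/m=I/m$, using Lemma~\ref{lemma: upperbound on minCount for spacesaving}. If $e$ does not survive, then $\hat f(e)=0$ and, by the SpaceSaving-merge guarantee above, $f(e)\le I(e)\le c\le I/m$. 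Either way $|f(e)-\hat f(e)|\le I/m$, and since $m=\alpha/\epsilon$ and $I\le\alpha F_1$ in the bounded-deletion model, this is at most $\epsilon F_1$; as $S_{merge}$ has exactly $m$ counters, Integrated SpaceSaving$\pm$ is mergeable.

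The step I expect to be the main obstacle is the absent-item case, specifically the claim that an item dropped in the merge has true insertion count at most $c\le I/m$. This is exactly where the argument must rely on the global SpaceSaving mergeability guarantee for the insert counts rather than on a naive per-summary bound: a per-summary argument would only give $I(e)\le insert_e^{union}+insert_{minItem}^{(j)}$ for an item present in one summary but truly inserted in the other, which risks a factor-two loss in the error and hence in the space. Pinning this down---verifying that the simple union-and-truncate merge really inherits the clean SpaceSaving guarantee (equivalently, the Misra--Gries merge guarantee~\cite{agarwal2013mergeable,metwally2005efficient}) so that the $O(\alpha/\epsilon)$ space is preserved exactly through merges---is the delicate point, together with checking that the summed delete counts can only cause overestimation (never underestimation) and are therefore absorbed by the insert-count error budget.
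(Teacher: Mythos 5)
Your handling of items that survive the merge is correct, and it coincides with the paper's first step: since Lemma~\ref{lemma, IntegratedSpaceSaving Error Upper Bound} bounds the error of \emph{every} item of a constituent summary (monitored or not), the identity $\hat f(e)=\hat f_1(e)+\hat f_2(e)$, the triangle inequality, and Lemma~\ref{lemma: upperbound on minCount for spacesaving} give error at most $insert_{minItem}^{(1)}+insert_{minItem}^{(2)}\le (I^{1}+I^{2})/m$ for kept items. The genuine gap is exactly the step you flagged as delicate. Algorithm~\ref{IntegratedSpaceSaving Merge} is a plain union-and-truncate merge: an item monitored in only one summary enters the union carrying only that summary's counts, with no compensation for its unrecorded activity in the other stream. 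Such a merge is \emph{not} the merge analyzed in~\cite{agarwal2013mergeable} (which works through the MG/SpaceSaving correspondence and adjusts counters by an offset rather than merely keeping the top $m$), so the invariants you import from SpaceSaving mergeability do not hold here: both claims --- that monitored items of $S_{merge}$ satisfy $insert_e\ge I(e)$, and that items dropped by the merge satisfy $I(e)\le c$ --- are false.

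Here is a counterexample (insertion-only, so $\alpha=1$; take $m=10$, hence $\epsilon=1/10$). Stream $\sigma^{1}$ inserts $a$ one hundred times, so $S_1$ holds $a$ with insert count $100$, exactly. Stream $\sigma^{2}$ inserts $a$ one hundred times, then $c_1,\dots,c_9$ each $101$ times, then $c_{10}$ $101$ times; the first insertion of $c_{10}$ evicts $a$, so $S_2$ holds $c_1,\dots,c_9$ with insert count $101$ and $c_{10}$ with insert count $201$, and $I^{2}=1110$. The union holds $a$ at count $100$ and the ten $c_i$ at counts $\ge 101$, so the merge drops $a$: $\hat f(a)=0$, while $f(a)=200$, $c=101$, and $\epsilon F_1^{final}=1210/10=121$. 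So not only your intermediate claim $I(a)\le c$ but the theorem's own bound fails for this merge rule; all that survives is the per-summary salvage $f(e)\le c+\max_j insert_{minItem}^{(j)}\le 2(I^{1}+I^{2})/m$, i.e., the factor-two loss you anticipated is real, not an artifact. You are in good company: the paper's proof trips at the same spot, when it asserts, citing Lemma~\ref{lemma: overestimation for IntegratedSpaceSaving}, that an evicted item's union insert count is at least its true combined frequency --- that lemma is a single-summary statement and does not apply to $a$ above. A sound resolution must either repair the merge (e.g., before truncating, add $insert_{minItem}^{(2)}$ to the count of every union item absent from $S_2$, and symmetrically, which restores the overestimation invariant and rescues either argument) or settle for error $2\epsilon F_1$, equivalently $m=2\alpha/\epsilon$ counters.
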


\begin{proof}
    Assume the given $IntegratedSpaceSaving\pm$ summaries with $m=\frac{\alpha}{\epsilon}$ counters, $S_1$ and $S_2$, have processed two $\alpha$ bounded deletion data stream $\sigma^{1}$ and $\sigma^{2}$ with $I^{1}$ and $I^{2}$ insertions and $F_{1}^{1}$ and $F_{1}^{2}$ total frequencies respectively. Let $F_{1}^{final} = F_{1}^{1} + F_{1}^{2}$.
    First, we observe that The union step do not introduce any additional error. As a result, the error for these items is at most $\epsilon F_{1}^{1} + \epsilon F_{1}^{2} \leq \epsilon F_{1}^{final}$.

    Since we don't want to underestimate any monitored items, Algorithm~\ref{IntegratedSpaceSaving Merge} keeps the largest $m$ items from the union-ed summary based on insert counts. We need to showcase that evicting all items except the largest $m$ items will not lead to error estimation larger than $\frac{I^{1} + I^{2}}{m}$, and we know that $\frac{I^{1} + I^{2}}{m} \leq \epsilon F_{1}^{final}$. This can be shown by contradiction. Assuming there exists an item $x$ in the union of $S_1$ and $S_2$ with true frequency larger or equal to $\frac{I^{1} + I^{2}}{m}$ and $x$ is not included than the $m$ largest items. Note, $x$'s insert count must be larger than its true frequency by Lemma~\ref{lemma: overestimation for IntegratedSpaceSaving}. Then this must imply the sum of the insert counts of $m$ largest item and $x$ is at least $(m+1)\cdot\frac{I^{1} + I^{2}}{m} > (I^{1}+I^{2})$. However, by Lemma~\ref{lemma: sum of insert counts equal to I}, we know the sum of insert counts cannot exceed $I^{1}+I^{2}$. As a result, evicting all items except the largest $m$ items from the union-ed summary based on insert counts does not lead to estimation error larger than $\epsilon F_{1}^{final}$. Therefore, Intergrated SpaceSaving$\pm$ is mergeable.

\end{proof} 

\section{Evaluation}
\label{sec-eval}
This section evaluates and compares the performance of our proposed algorithms with linear sketch from the turnstile model~\cite{li2014turnstile}. Our proposed counter based summaries have no assumption on the input universe but operates in the bounded deletion model. whereas linear sketch have dependency on the universe size but allow the entire data set to be deleted. The experiments aim to better understand the advantage and disadvantages of our proposed algorithms. The linear sketches that we compared with are:
\begin{itemize}
    \item \textbf{Count Min}~\cite{cormode2005improved}: Item's frequency is never underestimated.
    
    \item \textbf{Count Sketch}~\cite{charikar2002finding}: Provides an unbiased estimation, such that $E[\hat{f}(x)] = f(x)$ where $E$ is the expected value.
\end{itemize}

\textbf{Experiment Setup}. 
We implemented all of the algorithms in Python. Through all experiments, we set $\delta = U^{-1}$ for linear sketches to align the experiments with the theoretical literature~\cite{bhattacharyya2018optimal}.

\textbf{Data Sets}.
We use both synthetic and real world data consisting of items that are inserted and deleted. \textbf{Zipf Stream}: The insertions ($10^{5}$) are drawn from the Zipf distribution~\cite{zipf2016human} and deletions ($10^{5}/2$) are uniformly chosen from the insertions. Deletions happen after all the insertions. Cardinality is 22k.
\textbf{YCSB}: We use 60\% insertions and 40\% updates (delete followed by insert) with request distribution set to \textit{zipf} in YCSB  benchmark~\cite{cooper2010benchmarking} with interleaved 116645 insertions and 39825 deletions. Cardinality is 65k.

\textbf{Metrics}.
We use average relative error (ARE) as the metrics for frequency estimation. Let the final data set $D = \{x|f(x)>0\}$. ARE is calculated as $1/|D|(\sum_{x\in D} \frac{|f(x)-\claire{\hat{f}}(x)|}{f(x)})$. Lower ARE indicates more accurate approximations. For identifying the top 100 heavy hitters, we query all items in $D$ and report a set of 100 items with the largest frequency. Metrics is the F1 score.


\subsection{Main Results}
Our proposed algorithms perform very well on frequency estimation task, as shown in Figure~\ref{fig:exp}(a)-(b). The y-axis is the average relative error and the x-axis the the number of counter used. For fair comparisons, we let IntergratedSpaceSaving$\pm$ uses three counters per entry, SpaceSaving and unbiased SpaceSaving use two counters, and linear sketch uses one counter per entry. We find that Integrated SpaceSaving$\pm$ is always the best, followed by Double SpaceSaving$\pm$ (DSS$\pm$) or Unbiased Double SpaceSaving$\pm$ ((UDSS$\pm$)), and Count Sketch always ranks at the fourth. For instance, when using 16384 counters over YCSB dataset, average relative error for IntergratedSpaceSaving$\pm$ is 2.20, DSS$\pm$ is 2.25, UDSS$\pm$ is 2.26, Count Sketch is 2.29, and Count-Min is 35.

In identifying the top 100 items, linear sketches are more competitive to our proposed algorithms. As shown in Figure~\ref{fig:exp}(c)-(d), the y-axis is the F1 score and x-axis is the number of counters. When more memory budget is allowed, Integrated SpaceSaving$\pm$ is always the best.
When the space budget up to 8192 counters, DSS$\pm$ and UDSS$\pm$ always outperform linear sketches. For instance, using 8192 counters over YCSB, DSS$\pm$ and UDSS$\pm$ both achieve 0.95 F1 score, whereas Count Sketch score 0.91 and Count Min scores 0.78.
\section{Conclusion}

This paper presents a detailed analysis of SpaceSaving$\pm$ family of algorithms with bounded deletion. We proposed new algorithms built on top of the original SpaceSaving and SpaceSaving$\pm$. They exhibit many desirable properties such as no underestimation, unbiased estimation, residual/ relative error guarantees and mergability. We believe these unique characteristics make our proposed algorithms a strong candidate for real-world applications and systems.


\bibliographystyle{ACM-Reference-Format}
\bibliography{main-bib}


\end{document}